\documentclass[a4paper, 12pt]{article}

\usepackage[utf8]{inputenc}
\usepackage[T1]{fontenc}
\usepackage[a4paper, margin=2.5cm]{geometry}
\usepackage{needspace}

\usepackage{libertine} 
\usepackage{inconsolata} 

\usepackage{amsmath, amsthm, amssymb}
\usepackage{graphicx}
\usepackage{enumerate}
\usepackage{authblk}
\usepackage[textsize=footnotesize, textwidth=2cm, color=yellow]{todonotes}
\usepackage{thmtools}
\usepackage{thm-restate}
\usepackage[colorlinks=true, citecolor=red]{hyperref}
\usepackage{float}


\renewenvironment{abstract}
{\small\vspace{-1em}
\begin{center}
\bfseries\abstractname\vspace{-.5em}\vspace{0pt}
\end{center}
\list{}{
\setlength{\leftmargin}{0.6in}%
\setlength{\rightmargin}{\leftmargin}}%
\item\relax}
{\endlist}

\declaretheorem[name=Theorem, numberwithin=section]{theorem}
\declaretheorem[name=Lemma, sibling=theorem]{lemma}
\declaretheorem[name=Proposition, sibling=theorem]{proposition}
\declaretheorem[name=Definition, sibling=theorem]{definition}
\declaretheorem[name=Corollary, sibling=theorem]{corollary}
\declaretheorem[name=Conjecture, sibling=theorem]{conjecture}
\declaretheorem[name=Claim, sibling=theorem]{claim}

%

\def\cqedsymbol{\ifmmode$\lrcorner$\else{\unskip\nobreak\hfil
\penalty50\hskip1em\null\nobreak\hfil$\lrcorner$
\parfillskip=0pt\finalhyphendemerits=0\endgraf}\fi} 
\newcommand{\cqed}{\renewcommand{\qed}{\cqedsymbol}}


\interfootnotelinepenalty=10000

\def\AA{\mathbf{A}} %
 %
\def\BB{\mathbf{B}} %
\def\CC{\mathbf{C}} %
\def\D{\mathcal{D}} 
\def\G{\mathcal{G}} 
\def\H{\mathcal{H}} 
\def\E{\mathcal{E}} 
\def\LL{\mathbf{L}}
\def\N{\mathcal{N}} 
\newcommand{\NN}{\mathbb{N}} 
\def\Oh{O} 
\def\UU{\mathbf{R}} 
\def\TT{\mathbf{T}} 
\def\UU{\mathbf{U}} 

\newcommand{\algoa}{\texttt{A}} 
\newcommand{\algob}{\texttt{B}} 
\newcommand{\algoc}{\texttt{C}} 

\newcommand{\intv}[2]{\left\{#1,\dots,#2\right\}}
\newcommand{\doa}[1]{{\downarrow\!#1}} 
\newcommand{\upa}[1]{{\uparrow\!#1}} %

\DeclareMathOperator{\Min}{Min}
\DeclareMathOperator{\Max}{Max}
\DeclareMathOperator{\parent}{Parent}
\DeclareMathOperator{\child}{Children}
\DeclareMathOperator{\priv}{Priv}
\DeclareMathOperator{\poly}{poly}
\DeclareMathOperator{\MIS}{MIS}
\DeclareMathOperator{\Tr}{Tr}


\let\leq\leqslant
\let\geq\geqslant


\thickmuskip=5mu plus 1mu minus 2mu

\title{Enumerating minimal dominating sets\\ in the (in)comparability graphs\\ of bounded dimension posets\thanks{The first author has been supported by the ANR project GrR ANR-18-CE40-0032 (2019-2023). The second and last authors have been supported by the ANR project GraphEn ANR-15-CE40-0009. The third author has been supported by the Polish National Science Center grant (BEETHOVEN; UMO-2018/31/G/ST1/03718).}}

\author[1]{Marthe Bonamy}
\author[2]{Oscar Defrain}
\author[3]{\\Piotr Micek}
\author[4]{Lhouari Nourine}
\affil[1]{CNRS, LaBRI, Université de Bordeaux, France.}
\affil[2]{LIS, Aix-Marseille Université}
\affil[4]{LIMOS, Université Clermont Auvergne, France.}
\affil[3]{Theoretical Computer Science Department,\protect\\Faculty of Mathematics and Computer Science,\protect\\Jagiellonian University, Krak\'{o}w, Poland.}
\date{April 5, 2020}

\begin{document}

\maketitle

\begin{abstract}
Enumerating minimal transversals in a hypergraph is a notoriously hard problem. It can be reduced to enumerating minimal dominating sets in a graph, in fact even to enumerating minimal dominating sets in an incomparability graph.
We provide an output-polynomial time algorithm for incomparability graphs whose underlying posets have bounded dimension.
Through a different proof technique, we also provide an output-polynomial time algorithm for their complements, i.e., for comparability graphs of bounded dimension posets.
Our algorithm for incomparability graphs relies on the geometrical representation of incomparability graphs with bounded dimension, as given by Golumbic et al.\ in 1983. 
It runs with polynomial delay and only needs polynomial space.
Our algorithm for comparability graphs is based on the flipping method introduced by Golovach \mbox{et al.~in 2015}. 
It performs in incremental-polynomial time and possibly requires exponential space.
%



\vskip5pt\noindent{}{\bf Keywords:} minimal dominating sets, poset dimension, comparability graphs, incomparability graphs, algorithmic enumeration
\end{abstract}


\section{Introduction}\label{sec:intro}

The problem we consider in this paper is the enumeration of all (inclusion-wise) minimal dominating sets of a graph, denoted by \textsc{Dom-Enum}.
A \emph{dominating set} in a graph $G$ is a set of vertices $D$ such that every vertex of $G$ is either in $D$, or is adjacent to a vertex in $D$.
It is (inclusion-wise) \emph{minimal} if no strict subset of $D$ is a dominating set.
Due to its equivalence with the problem of enumerating all (inclusion-wise) minimal transversals of a hypergraph, denoted by \textsc{Trans-Enum}, \textsc{Dom-Enum} has been widely studied this last decade. 
Latest publications on the problem include~\cite{golovach2018lmimwidth,kante2018holes,bonamy2019triangle,defrain2019neighborhood,golovach2019input}.

As an $n$-vertex graph $G$ may contain a number $|\D(G)|$ of minimal dominating sets which is exponential in $n$, one can only aim to provide algorithms running in time exponential in $n$, or, to provide algorithms whose running time is polynomial in $n+|\D(G)|$.
We only focus here on algorithms of the second type, and refer to~\cite{fomin2008combinatorial,couturier2013minimal,golovach2019input} for input exponential-time algorithms for the problem we consider in this paper.
An algorithm of the second type---running in polynomial time in the sizes of both the input and the output---is called \emph{output-polynomial}.
It is said to be running in \emph{incremental-polynomial time} if it moreover outputs the $i$-th solution in a time which is bounded by a polynomial in $n$ plus $i$, for all $i$.
If the running times before the first output, between any two consecutive outputs, and after the last output, are bounded by a polynomial in $n$ only, then the algorithm is said to be running with \emph{polynomial delay}.
We refer the reader to~\cite{johnson1988generating,creignou2019complexity,strozecki2019survey} for a more detailed introduction on the complexity of enumeration algorithms.

The existence of an output-polynomial time algorithm for \textsc{Dom-Enum} (or \textsc{Trans-Enum}) is a long-standing open question~\cite{eiter1995identifying,eiter2008computational,kante2014split}.
To date, the best known algorithm is due to Fredman and Khachiyan and comes from the dualization of monotone Boolean functions~\cite{fredman1996complexity}. 
It runs in output quasi-polynomial time $N^{o(\log N)}$ where $N=n+|\D(G)|$.
Output-polynomial time algorithms are known for several classes of graphs, including $\log(n)$-degenerate graphs~\cite{eiter2003new} and triangle-free graphs~\cite{bonamy2019triangle}, later extended to $K_t$-free, paw-free, and diamond-free graphs in \cite{bonamy2020kt}.
Incremental-poly\-nomial time algorithms are known for chordal bipartite graphs \cite{golovach2016chordalbip}, $\{C_6,C_8\}$-free bipartite graphs \cite{kante2018holes}, graphs of bounded conformality \cite{boros2004generating}, and unit square graphs \cite{golovach2018lmimwidth}.
Polynomial-delay algorithms are known for bounded degeneracy graphs~\cite{eiter2003new}, for chordal, and line graphs~\cite{kante2015chordal,kante2015line}.
Finally, linear-delay algorithms are known for graphs of bound\-ed cliquewidth \cite{courcelle2009linear}, permutation and interval graphs \cite{kante2013permutation}, split and $P_7$-free chordal graphs \cite{kante2014split,defrain2019neighborhood}. In terms of negative results, \textsc{Dom-Enum} is as hard in co-bipartite graphs as in general~\cite{kante2014split}. 

The graph classes we consider in the following are related to partially ordered sets, in short posets.
If $P$ is a poset on the set of elements $V$, then the \emph{comparability graph} of $P$ is the graph $G$ defined on vertex set $V(G)=V$ and where two distinct vertices $u$ and $v$ are adjacent if they are comparable in~$P$.
The \emph{incomparability graph} of $P$ is the complement: $u$ and $v$ are adjacent if they are incomparable in~$P$.
Note that every bipartite graph is a comparability graph, hence every co-bipartite graph is an incomparability graph.
The dimension, introduced in 1941 by Dushnik and Miller~\cite{dushnik1941partially}, is a key measure of complexity of posets and an analogue of the chromatic number for graphs.
The \emph{dimension} of a poset $P$ is the least integer $d$ such that elements of $P$ can be embedded into $\mathbb{R}^d$ in such a way that $x\leq y$ in $P$ if and only if the point of $x$ is below the point of $y$ with respect to the product order of $\mathbb{R}^d$ (the component-wise comparison of coordinates). 
Alternatively, the dimension of a poset can be defined as the least integer $d$ such that $P$ is the intersection of $d$ linear orders $\leq_1,\dots,\leq_d$ on same ground set, i.e., $x\leq y$ in $P$ if and only if $x\leq_1 y$, $\dots$, $x\leq_d y$.

Since \textsc{Dom-Enum} is as hard in incomparability graphs as in general, an output-polynomial algorithm would be a major, albeit unlikely, breakthrough.
Algorithms for natural subclasses of incomparability graphs, such as interval graphs (incomparability graphs of interval orders) and permutations graphs (incomparability graphs of $2$-dimensional posets), were obtained in~\cite{kante2013permutation}. Our contribution is as follows.

\needspace{2cm}

\begin{theorem}\label{thm:mainincomp}
    For any fixed integer $d$, there is a polynomial delay and space algorithm enumerating minimal dominating sets in the incomparability graphs of posets of dimension at most $d$, represented as the intersection of $d$ linear orders. 
\end{theorem}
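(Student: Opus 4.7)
The plan is to use the \emph{flashlight search} paradigm combined with the geometric representation of $d$-dimensional posets. Fix any ordering $v_1,\dots,v_n$ of $V(G)$. Flashlight search explores the binary tree of partial assignments, maintaining at each node of depth $i$ disjoint sets $A,B$ with $A\cup B=\{v_1,\dots,v_i\}$ --- vertices respectively forced into or out of the target solution --- and recurses only on children that can still be completed. Consequently, if the \emph{extension problem} --- given disjoint $A,B\subseteq V$, decide whether some minimal dominating set $D$ satisfies $A\subseteq D$ and $D\cap B=\emptyset$ --- can be solved in polynomial time, then the whole enumeration runs with polynomial delay and polynomial space. The heart of the proof is therefore to solve this extension problem in polynomial time on incomparability graphs of $d$-dimensional posets, with $d$ fixed.

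First, embed $P$ into $\mathbb{R}^d$ through the Golumbic \emph{et al.}\ representation so that $u\sim v$ in $G$ iff $u$ and $v$ are incomparable in the product order; then $N[v]=V\setminus((\doa{v}\cup\upa{v})\setminus\{v\})$ and adjacency or domination queries reduce to $d$ coordinatewise comparisons. A completion $D=A\cup S$ with $S\subseteq V\setminus(A\cup B)$ is a minimal dominating set iff (i) every vertex of $V\setminus D$ has a neighbor in $D$, and (ii) every $v\in D$ admits a private neighbor $p_v\in N[v]$ with $N[p_v]\cap(D\setminus\{v\})=\emptyset$. For each $a\in A$ the set $\priv(a,A)=N[a]\setminus N[A\setminus\{a\}]$ of a priori candidate privates is polynomial-time computable; geometrically, a candidate $p_a$ remains private after adding $S$ iff every $s\in S$ lies in $\doa{p_a}\cup\upa{p_a}$, i.e.\ is comparable to $p_a$ in $P$.

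The crux is deciding in polynomial time whether some $S\subseteq V\setminus(A\cup B)$ meets all the resulting privacy and domination conditions simultaneously. Here the bounded dimension is essential: each privacy requirement ``some $p_a\in\priv(a,A)$ stays alive'' and each domination requirement ``some $s\in S$ is incomparable to a given undominated vertex $u$'' is a disjunction of conjunctions of $\Oh(d)$ coordinatewise inequalities. A natural strategy is to sweep the points of $V$ in a well-chosen coordinate order while maintaining a polynomial-size data structure that tracks, for each $a\in A$, which candidates in $\priv(a,A)$ are still ``alive'' (comparable to all $s$ added so far) and which not-yet-dominated vertices still lack a dominator in $S$. Minimality of newly added vertices fits into the same framework, since a private of $s\in S$ is again described by coordinatewise comparisons.

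The main obstacle will be to design this subroutine so that it correctly decides feasibility in polynomial time despite the combinatorial explosion one naively anticipates from jointly choosing $|A|$ privates and a set $S$ of unknown size. The key leverage I expect to exploit is that in the product order on $\mathbb{R}^d$ many relations between candidate privates and candidate dominators are monotone in each coordinate, so that the feasibility of the whole system reduces to one over a structure with polynomial-size state space --- e.g.\ a dynamic program indexed by coordinate ranks, or a flow/matching formulation over an arrangement of $n^{\Oh(d)}$ cells. Plugging the resulting polynomial-time extension oracle into the flashlight search then delivers the polynomial-delay, polynomial-space enumeration claimed in the theorem.
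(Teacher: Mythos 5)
There is a genuine gap: your argument reduces the theorem to a polynomial-time oracle for the \emph{general} extension problem (given arbitrary disjoint $A,B$, does some minimal dominating set contain $A$ and avoid $B$?), and then does not actually construct that oracle --- the final two paragraphs explicitly defer the crux (``the main obstacle will be to design this subroutine'', ``the key leverage I expect to exploit'') to an unspecified dynamic program or flow formulation. That is precisely the part that needs a proof. Worse, the general extension problem for minimal dominating sets is NP-hard already in quite restricted graph classes, so aiming at arbitrary included/excluded pairs $(A,B)$ is likely a dead end: nothing in the bounded-dimension structure obviously tames the joint choice of one private neighbor per vertex of $A$ plus a completion $S$, and your own text concedes the ``combinatorial explosion one naively anticipates.''

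The paper avoids this trap by never posing the general extension question. It orders the search by the linear extensions themselves: a node of the search tree is an irredundant set $I$, its children add one vertex that is $<_1$-above everything in $I$, and the only question ever asked is whether $I$ admits an \emph{upward} extension, i.e.\ one contained in $I\cup\UU(I)$ where $\UU(I)$ consists of vertices exceeding the maximum of $I$ in some order $\leq_i$. The whole content is then Theorem~\ref{thm:bt-extension}: for an irredundant $I$ one only needs to guess private neighbors $(w_1,\dots,w_{3d})$ for the $3d$ ``border'' vertices $\TT(I)$ (three greedy layers of $\leq_i$-maxima) and check that $\UU(I)\setminus N[w_1,\dots,w_{3d}]$ dominates $G-N[I]$; a geometric argument with the curve representation shows that any such upward extension automatically preserves a private neighbor for \emph{every other} vertex of $I$. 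This collapses the search over private-neighbor assignments from $n^{|I|}$ to $n^{3d}$ and is the step your proposal is missing. Without an analogue of this border lemma (or a concrete, verified polynomial-state sweep), the proposal is a plausible plan rather than a proof.
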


It should be noted here that the incomparability graphs of posets of dimension $d$ are $d$-trapezoid graphs, for which minimal dominating sets enumeration was already shown to be possible with linear delay after $n^{O(1)}$-time preprocessing, when given with their representation~\cite{golovach2018lmimwidth}.
We nevertheless point that our algorithm is remarkably simpler that the one obtained in \cite{golovach2018lmimwidth} which may be of practical interest for implementations purposes.
Our algorithm relies on the geometrical representation of incomparability graphs of bounded dimension, which was given by Golumbic et al.~in~\cite{golumbic1983comparability}.

\textsc{Dom-Enum} in comparability graphs is also widely open:
only the subcase of bipartite graphs (comparability graphs of posets of height at most $2$) has been solved recently~\cite{bonamy2019triangle}. 
However, there is no reason to believe that \textsc{Dom-Enum} is as hard for comparability graphs as in general. We prove that it is tractable when the underlying poset has bounded dimension, as follows.

\begin{theorem}\label{thm:maincomp}
    For any fixed integer $d$, there is an incremental-polynomial time algorithm enumerating minimal dominating sets in the comparability graphs of posets of dimension at most~$d$. 
\end{theorem}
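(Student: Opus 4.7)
The plan is to invoke the flipping method of Golovach et al., which reduces enumeration of minimal dominating sets on a graph class $\mathcal{C}$ to the following extension problem on $\mathcal{C}$: given $G \in \mathcal{C}$ and disjoint sets $A, B \subseteq V(G)$, decide whether $G$ has a minimal dominating set $D$ with $A \subseteq D$ and $D \cap B = \emptyset$. Whenever this extension problem is polynomial-time solvable on $\mathcal{C}$, the flipping method yields an incremental-polynomial enumeration algorithm: starting from a seed minimal dominating set, it explores the ``flip graph'' of minimal dominating sets by local single-element changes, using the extension oracle to decide which changes admit a valid completion, and storing the emitted solutions to avoid duplicates (hence the exponential-space bound announced in the abstract).

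To apply this framework to the comparability graph $G$ of a poset $P$ of dimension $d$, I would use the geometric representation of $P$ coming from its $d$ linear extensions: identify each vertex $v$ with a point $(v_1,\dots,v_d)\in\mathbb{R}^d$ so that $u$ and $v$ are adjacent in $G$ iff one dominates the other coordinate-wise. Then the closed neighborhood $N[v]$ splits as the disjoint union of the up-quadrant $\{u:u_i\ge v_i \text{ for all } i\}$ and the down-quadrant $\{u:u_i\le v_i \text{ for all } i\}$, and the set of vertices dominated by a candidate $D$ is a union of $2|D|$ axis-parallel quadrants in $\mathbb{R}^d$. The extension problem then becomes a geometric covering-and-avoidance task: find $D\supseteq A$ disjoint from $B$ whose quadrants cover all of $V\setminus D$, while each vertex of $D$ retains a private witness in its neighborhood that lies outside the quadrants of every other element of $D$.

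I would solve this task by a branching and dynamic-programming scheme that recurses on the $d$ coordinates. The key observation is that each privacy or non-domination constraint is an intersection of $O(d)$ half-spaces, so the overlay of all such constraints has combinatorial complexity at most $n^{O(d)}$. This yields a polynomial-time algorithm for every fixed $d$, and plugging it into the flipping framework produces the desired output-polynomial enumeration algorithm.

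The main obstacle is the simultaneous handling of the three interacting requirements during the extension: (i) dominating every vertex of $V\setminus D$, (ii) preserving a private neighbor for each vertex already committed to $D$, and (iii) ensuring inclusion-minimality of the final set. The bounded-dimension hypothesis is essential here because it controls the number of orthants around each vertex and hence the combinatorial complexity of the privacy/coverage interplay; without it, the number of region types to track during the recursion blows up and the reduction collapses. Implementing this extension oracle carefully, and then feeding it into the flipping framework in the spirit of the triangle-free algorithm of Bonamy et al., should give the theorem.
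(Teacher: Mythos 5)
There is a genuine gap here, and it starts with a mischaracterization of the flipping method. That method does not reduce \textsc{Dom-Enum} to the \emph{decision} problem ``does some minimal dominating set contain $A$ and avoid $B$?''; what it requires is, for each already-found minimal dominating set $D^*$, an algorithm that \emph{enumerates} a family containing all children of $D^*$ under the flip/parent relation. (The decision-oracle-plus-backtracking scheme you describe is flashlight search, which this paper uses for \emph{in}comparability graphs; note also that the extension problem for minimal dominating sets is NP-hard in general, so treating it as a freely available oracle is not innocuous.) The paper instead shows (Lemma~\ref{lemma:flipping-to-redblue}) that generating the children of $D^*$ along a fixed edge $uv$ reduces to enumerating the minimal red dominating sets $\D(R,B)$ of an auxiliary red-blue comparability graph in which $B$ is an antichain of elements minimal in $P[\upa{B}]$. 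Your ``geometric covering-and-avoidance task'' is exactly this red-blue domination problem, which is polynomially equivalent to \textsc{Trans-Enum} even on bipartite (hence comparability) graphs, so the claim that an arrangement of $n^{O(d)}$ orthant cells yields a polynomial branching/DP algorithm is asserting, without argument, a solution to a problem as hard as hypergraph dualization; bounding the combinatorial complexity of the constraint regions does nothing to defeat the set-cover-like hardness.

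The idea your proposal is missing is \emph{conformality}. The hypergraph $\{N(y)\cap R \mid y\in B\}$ arising in the flipping step has conformality less than $t$ unless the poset contains the standard example $S_t$ as a suborder (Lemma~\ref{lemma:St}): a minimal witness $X$ of large conformality, together with the hyperedges omitting each single element of $X$, assembles into an induced $S_t$. Since $S_t$ has dimension $t$, a poset of dimension at most $d$ yields hypergraphs of conformality at most $d$, and the algorithm of Khachiyan, Boros, Elbassioni and Gurvich then enumerates $\D(R,B)$ in incremental-polynomial time. That blackbox --- not a geometric dynamic program --- is the engine that makes the comparability case go through (and is also why the resulting algorithm needs exponential space despite the polynomial-space version of the flipping method). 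Without the conformality lemma, or a genuinely worked-out replacement for it, your argument does not close.
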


The algorithm for comparability graphs is based on the \emph{flipping method} recently introducted by Golovach et al.~in~\cite{golovach2015flipping}.
It uses as a blackbox the algorithm of Khachiyan et al.~in \cite{khachiyan2007dualization} for the dualization in hypergraphs of bounded conformality and therefore may need space that is exponential in the size of the input.
It also covers the comparability graphs of $S_t$-free posets.

In addition to these contributions, we show with Lemma~\ref{lemma:il-flip-golovach-improved} how the flipping method can be improved to work only using polynomial space. While this is not enough to make Theorem~\ref{thm:maincomp} run in polynomial space, it turns existing algorithms for line graphs \cite{golovach2015flipping}, graphs of girth at least 7, chordal bipartite graphs \cite{golovach2016chordalbip}, and unit-square graphs \cite{golovach2018lmimwidth}, 
into incremental poly\-nomial time algorithms using polynomial space. 

The paper is organized as follows.
In Section~\ref{sec:preliminaries} we introduce definitions and notation from graph and order theory that will be used throughout the paper.
In Section~\ref{sec:polyflipping}, we recall the flipping method from~\cite{golovach2015flipping} and show how it can be improved to run using polynomial space.
We then show in Section~\ref{sec:flipping} how the method can be reduced to red-blue domination---a variant of domination---in comparability graphs.
In Section~\ref{sec:red-blue}, we show how red-blue domination can be solved in the comparability graphs of posets of bounded dimension, and conclude with our first algorithm.
The algorithm for incomparability graphs is presented in Section~\ref{sec:flashlight}.
We conclude the paper with discussions and open problems in Section~\ref{sec:conclusion}.

\section{Preliminaries}\label{sec:preliminaries}

The objects considered in this paper are finite.
If $A$ and $B$ are sets, we denote by $2^A$ the set of all subsets of $A$, and by $A\times B$ the Cartesian product $\{(a,b) \mid a\in A,\ b\in B\}$.

We start with some basic notions from graph theory.   
A \emph{graph} $G$ is a pair $(V(G),E(G))$ with $V(G)$ its set of vertices (or \emph{ground set}) and 
$E(G)\subseteq \{\{u,v\} \mid u,v\in V(G),\ {u\neq v}\}$ its set of edges.
Edges are denoted by $uv$ (or $vu$) instead of $\{u,v\}$.
Two vertices $u,v$ of $G$ are called \emph{adjacent} if $uv\in E(G)$.
A \emph{clique} (respectively an \emph{independent set}) in a graph $G$ is a set of pairwise adjacent (respectively non-adjacent) vertices.
A \emph{biclique} is a set of vertices that can be partitioned into two independent sets $A, B$ such that every vertex in $A$ is adjacent to every vertex in $B$.
We denote $K_t$ to be the clique on $t$ elements, and  $K_{t,t}$ to be the biclique on $2t$ elements of partition $A,B$ such that $|A|=|B|=t$.
The subgraph of $G$ \emph{induced} by $X\subseteq V(G)$, denoted by $G[X]$, is the graph $(X,E(G)\cap \{\{u,v\} \mid u,v\in X,\ u\neq v\})$; $G-X$ is the graph $G[V(G)\setminus X]$.
For every graph $H$, we say that $G$ is \emph{$H$-free} if no induced subgraph of $G$ is isomorphic to~$H$.

\paragraph{Domination}
Let $G$ be a graph and $u$ be a vertex of $G$.
The \emph{neighborhood} of $u$ is the set $N(u)=\{v\in V(G) \mid uv\in E(G)\}$.
The \emph{closed neighborhood} of $u$ is the set $N[u]= N(u)\cup\{u\}$.
For a subset $X\subseteq V(G)$ we define $N[X]=\bigcup_{x\in X} N[x]$ and $N(X)=N[X]\setminus X$.
Let $D,X\subseteq V(G)$ be two subsets of vertices of $G$.
We say that $D$ \emph{dominates} $X$ if $X\subseteq N[D]$.
It is (inclusion-wise) minimal if $X\not\subseteq N[D\setminus \{x\}]$ for any $x\in D$.
A (minimal) \emph{dominating set} of $G$ is a (minimal) dominating set of $V(G)$.
The set of all minimal dominating sets of $G$ is denoted by $\D(G)$. 
The problem of enumerating $\D(G)$ given $G$ by \textsc{Dom-Enum}.
For a graph $G$, we denote by $\MIS(G)$ the set of all its (inclusion-wise) maximal independent sets, and by \textsc{MIS-Enum} the problem of generating $\MIS(G)$ from $G$.
It is easily observed that every maximal independent set is a minimal dominating set, hence that $\MIS(G)\subseteq \D(G)$.
However, \textsc{MIS-Enum} appears to be much more tractable than \textsc{Dom-Enum}, as witnessed by the many efficient algorithms that are known for the problem~\cite{tsukiyama1977new,johnson1988generating,makino2004new}.
These last two observations are the starting point of the flipping method introduced in~\cite{golovach2015flipping} which we describe at Section~\ref{sec:polyflipping}.

Let $u\in D$. 
A vertex $v$ that is adjacent only to $u$ in $D$, i.e., for which $N[v]\cap D=\{u\}$, is a \emph{private neighbor} of $u$ with respect to $D$.
Note that $u$ can be its own private neighbor (we say that $u$ is \emph{self-private}).
The set of private neighbors of $u\in D$ is denoted by $\priv(D,u)$.
A set $I\subseteq V(G)$ is called \emph{irredundant} if $\priv(I,x)\neq\emptyset$ for all $x\in I$.
Then $D$ is a minimal dominating set of $G$ if and only if it is both a dominating set, and an irredundant set.

A graph $G$ together with two disjoint subsets $R,B\subseteq V(G)$ constitutes a \emph{red-blue graph} $G(R,B)$, where we consider vertices in $R$ to be \emph{red} and those in $B$ to be \emph{blue}. 
We do not require for $R$ and $B$ to partition $V(G)$. 
A \emph{red dominating set} of $G(R,B)$ is a red set $D\subseteq R$ that dominates $B$, i.e., that is such that $B\subseteq N[D]$.
It is (inclusion-wise) minimal if $B\not\subseteq N[D\setminus \{x\}]$ for any ${x\in D}$.
We denote $\D_G(R,B)$ to be the set of all minimal red dominating sets of $G(R,B)$, and \textsc{Red-Blue-Dom-Enum} the problem of enumerating $\D_G(R,B)$ given $G$, $R$, and~$B$. The index may be dropped when the graph $G$ is clear from the context. 
In the context of red-blue domination, we implicitly restrict our attention to the dominating sets that contain only red vertices, and to private neighbors that are blue.

\paragraph{Posets}

A \emph{partially ordered set} (or \emph{poset}) $P=(V,\leq)$ is a pair where $V$ is a set and $\leq$ is a  binary relation on $V$ that is reflexive, anti-symmetric and transitive.
Two elements $u$ and $v$ of $P$ are said to be \emph{comparable} if $u \leq v$ or $v \leq u$, otherwise they are said to be \emph{incomparable}, denoted $u \parallel v$.
A \emph{chain} in a poset $P$ is a set of pairwise comparable elements in $P$. %
An \emph{antichain} in a poset $P$ is a set of pairwise incomparable elements in $P$. 
A poset $P$ is called a \emph{total order} (or \emph{linear order}) if $V$ is a chain.
Posets are represented by their Hasse diagram; see Figure~\ref{fig:graph}.

The \emph{comparability graph} of a poset $P=(V,\leq)$ is the graph $G$ defined on same ground set $V(G)=V$ and where two vertices $u$ and $v$ are made adjacent if they are comparable in~$P$.
The \emph{incomparability graph} (or \emph{co-comparability graph}) of $P$ is the complement: $u$ and $v$ are made adjacent if they are incomparable in $P$, see Figure~\ref{fig:graph}.

We now define notations from order theory that will be used in the context of a poset and its \emph{comparability graph} only.
Let $G$ be the comparability graph of a poset $P=(V,\leq)$.
The \emph{ideal of $u$} is the set $\doa{u}=\{v\in V \mid v\leq u\}$, and the \emph{filter of $u$} is the dual $\upa{u}=\{v\in V \mid u\leq v\}$.
Note that $N[u]=\doa{u}\cup \upa{u}$.
These notions extend to subsets $S\subseteq V$ as follows: $\doa{S}=\bigcup_{u\in S} \doa{u}$, $\upa{S}=\bigcup_{u\in S} \upa{u}$. We note $\Min(S)$ and $\Max(S)$ the sets of minimal and maximal elements in $S$ with respect to $\leq$.
Clearly, $\Min(S)$ and $\Max(S)$ define antichains of $P$ for every $S\subseteq V$. 
In this context, a (minimal) dominating set of $G$ is a (minimal) set $D\subseteq V$ such that $\upa{D}\cup \doa{D}=V$, and a (maximal) independent set of $G$ is a (maximal) antichain of $P$.

\begin{figure}
    \centering
    \includegraphics[scale=1.2]{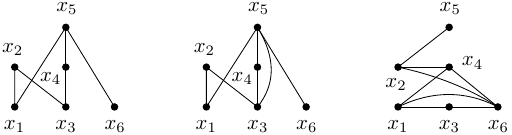}
    \caption{The Hasse diagram of a poset (left), its comparability graph (middle), and incomparability graph (right).}
    \label{fig:graph}
\end{figure}

The \emph{dimension} of a poset $P=(V,\leq)$ is the least integer $d$ such that $P$ is the intersection of $d$ linear orders $\leq_1,\dots,\leq_d$ on $V$, i.e., $x\leq y$ if and only if $x\leq_1 y$, $\dots$, $x\leq_d y$.

We call the poset induced by $X\subseteq V$, denoted $P[X]$, the suborder restricted on the elements of $X$ only.
A poset $P=(V,\leq)$ is \emph{bipartite} if $V$ can be partitioned into two sets $A,B$ such that $a<b$ implies $a\in A$ and $b\in B$.
We denote by $S_t$ and call \emph{standard example of order $t$} the poset with bipartition $A=\{a_1,\dots,a_t\}$ and $B=\{b_1,\dots,b_t\}$ such that $a_i<b_j$ for all $i \neq j\in\intv{1}{t}$.
See Figure~\ref{fig:posets} for an example.
It is well known that $S_t$ has dimension $t$~\cite{dushnik1941partially}.
Note that dimension is monotone under taking induced suborders. Therefore, posets containing a large standard example as a suborder have large dimension. The converse is however not true, as there are posets of unbounded dimension with no $S_2$ as an induced suborder, see~\cite{Tro-book} for a comprehensive study and references.

\begin{figure}[H]
    \centering
    \includegraphics[scale=1.2]{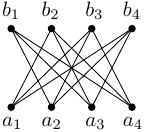}
    \caption{The standard example of order four.}
    \label{fig:posets}
\end{figure}

\paragraph{Hypergraphs}

A \emph{hypergraph} $\H$ is a pair $(V(\H),\allowbreak{}\E(\H))$ with $V(\H)$ its set of vertices (or \emph{ground set}) and 
$E(\H)\subseteq 2^{V(\H)}$ its set of edges (or hyperedges).
A hypergraph $\H$ is called \emph{Sperner} if $E_1\not\subseteq E_2$ for any two distinct hyperedges in $\H$.
A~\emph{transversal} of $\H$ is a set $T\subseteq V(\H)$ that intersects every hyperedge in~$\H$.
It is (inclusion-wise) minimal if $T\setminus \{x\}$ is not a transversal of $\H$ for any $x\in T$.
The sets of all minimal transversals of $\H$ is denoted by $\Tr(\H)$, and the problem of enumerating $\Tr(\H)$ given $\H$ by \textsc{Trans-Enum}.

Note that \textsc{Dom-Enum} appears as a particular case of \textsc{Trans-Enum} when considering the hypergraph $\N(G)$ of closed neighborhoods of a given graph $G$, defined by $V(\N(G))=V(G)$ and $\E(\N(G))=\{N[x] \mid x\in V(G)\}$.
Indeed, a minimal dominating set of $G$ is a minimal set intersecting every neighborhood of $G$, and hence $\Tr(\N(G))=\D(G)$.
In fact, the two problems were shown polynomially equivalent\footnote{\label{equivalence}%
Two enumeration problems $\Pi_A$ and $\Pi_B$ are said to be \emph{polynomially equivalent} when there is an output-polynomial time algorithm solving $\Pi_A$ if and only if there is one solving $\Pi_B$. 
If there is an output-polynomial time algorithm solving  $\Pi_A$ whenever there is one solving $\Pi_B$, then we say that $\Pi_B$ is \emph{(at least) as hard as} $\Pi_A$.}
by Kanté et al.~in \cite{kante2014split} even when restricted to co-bipartite graphs.
This in particular shows that \textsc{Dom-Enum} in co-bipartite graphs is as hard as for general graphs.
As for \textsc{Trans-Enum} and \textsc{Red-Blue-Dom-Enum}, it is easily seen that they are polynomially equivalent even when restricted to red-blue bipartite graphs of bipartition red and blue.
First, every red-blue graph $G(R,B)$ corresponds to a hypergraph $\H$, defined by  $V(\H)=R$ and $\E(\H)=\{N(x)\cap R \mid x\in B\}$, satisfying $\Tr(\H)=\D(R,B)$.
Then, every instance $\H$ of \textsc{Trans-Enum} corresponds to a bipartite instance $G(R,B)$ of \textsc{Red-Blue-Dom-Enum}, defined by $R=V(\H)$, $B=\{y_E \mid E\in \E(\H)\}$, with an edge $xy_E$ if and only if $x\in E$, which satisfies $\D(R,B)=\Tr(\H)$.

\section{Polynomial-space flipping method}\label{sec:polyflipping}

Golovach, Heggernes, Kratsch and Villanger introduced in~\cite{golovach2015flipping} the so-called \emph{flipping method} to efficiently enumerate minimal dominating sets in line graphs. 
This method was later used with much success~\cite{golovach2016chordalbip,golovach2018lmimwidth,kante2018holes}. 
A key step in this method is the \emph{flipping operation}. 
We recall it below and in the process, we show how the flipping method can be made to run with polynomial space (in contrast to exponential space from the original version).

\subsection{The flipping operation}\label{sec:flipping-operation}

Let~$G$ be an $n$-vertex graph and $v_1,\dots,v_n$ be any ordering of vertices in $G$.
We note that this order induces a lexicographical order on the family $2^{V(G)}$.
Let $D$ be a minimal dominating set of $G$ such that $G[D]$ contains at least one edge incident to some vertex $u$.
The following procedure is illustrated in Figure~\ref{fig:flipping-explained}.
Since $D$ is a minimal dominating set, the set $\priv(D,u)$ is not empty.
Let $v\in \priv(D,u)$. Since $u$ is adjacent with another vertex from $D$, we have $u\not\in \priv(D,u)$, so $v \neq u$. 
We want to replace $u$ with $v$ (\emph{flip} $u,v$) and obtain another minimal dominating set.
Let $X_{uv} \subseteq \priv(D,u)\setminus N[v]$ be the lexicographically smallest maximal independent set in $G[\priv(D,u)\setminus N[v]]$. 
Note that $X_{uv}$ is obtained from the empty set by iteratively adding a vertex of smallest index in $\priv(D,u)\setminus N[\{v\} \cup X_{uv}]$, until no such vertex exists.
Consider the set $D'=(D \setminus \{u\}) \cup X_{uv} \cup \{v\}$. 
Note that $D'$ is a (not necessarily minimal) dominating set of $G$. 
Some vertices of $D'$ may have no private neighbors, however every vertex of $X_{uv}\cup\{v\}$ is self-private.
Let $Z_{uv}$ be the lexicographically smallest set which has to be removed from $D'$ in order to make it minimal.
Similarly, $Z_{uv}$ is  obtained from the empty set by iteratively adding a vertex $z$ of smallest index in $D'\setminus Z_{uv}$ such that $z$ has no private neighbor with respect to $D' \setminus Z_{uv}$, until no such vertex exists. 
Since the elements of $X_{uv}\cup \{v\}$ are self-private in $D'$, the sets $X_{uv}\cup \{v\}$ and $Z_{uv}$ are disjoint, and non-adjacent.
Let us finally set $D^*=((D \setminus \{u\}) \cup X_{uv} \cup \{v\})\setminus Z_{uv}$.
Then $D^*$ is a minimal dominating set of $G$.

\begin{figure}
    \centering
    \includegraphics[scale=1.2]{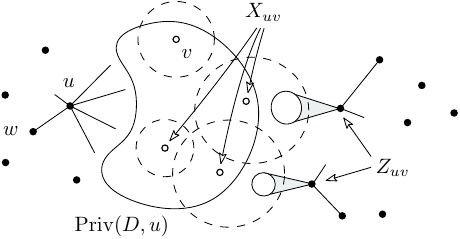}
    \caption{An illustration of the flipping operation on a dominating set $D$ such that $G[D]$ contains at least one edge incident to some vertex $u$, here depicted by $uw$. Black vertices are elements of $D$, white vertices are some elements of $\priv(D,u)$. Dashed discs represent closed neighborhoods, and plain disks represent private neighborhoods.}
    \label{fig:flipping-explained}
\end{figure}

Observe that since $X_{uv}$ and $Z_{uv}$ are selected greedily with respect to $v_1,\dots,v_n$, this procedure is deterministic.
Therefore, the procedure assigns to every minimal dominating set $D$ of $G$, and to every two vertices $u,v$ such that $u$ is in $D$ and is not isolated in $G[D]$, and $v\in \priv(D,u)$, a unique set~$D^*$.
We call $D^*$ the \emph{parent of $D$ with respect to flipping $u$ and $v$}, and denote it by $\parent_{uv}(D)$. 
The \emph{parent of $D$}, denoted by $\parent(D)$, is defined as the set $\parent_{uv}(D)$ for the smallest possible choice of a pair $u,v$; note that $D$ has a unique parent according to this definition.
Conversely, we denote by $\child(D^*)$ the set of all minimal dominating sets $D$ such that $D^*=\parent(D)$, and call \emph{child of $D^*$} any element of $\child(D^*)$. 
We point that the size of such a set may be exponential in $n$.
Note that, in the procedure, every edge in $G[D^*]$ is also an edge in $G[D]$, while there is at least one edge incident with $u$ that   appears in $G[D]$ and not in $G[D^*]$. 
This simple but important observation was formalized as follows.

\begin{proposition}[\cite{golovach2015flipping}]\label{prop:il-flip-golovach}
    Let $D,D^*\in \D(G)$ be such that $D^*=\parent_{uv}(D)$ for some edge $uv$.
    Then $E(G[D^*])\subsetneq E(G[D])$ and $v$ is an isolated vertex of $G[D^*]$.
\end{proposition}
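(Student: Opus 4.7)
The plan is to verify the three assertions (membership of $v$ in $D^*$, isolation of $v$ in $G[D^*]$, and proper inclusion of edge sets) by tracing the definitions of $X_{uv}$, $Z_{uv}$, and $D^*$ from the flipping operation, using only the observations already highlighted in the construction (in particular, that every element of $X_{uv}\cup\{v\}$ is self-private in $D'=(D\setminus\{u\})\cup X_{uv}\cup\{v\}$, and that self-privacy is preserved under removing other vertices from a set).

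First I would show that $v$ is isolated in $G[D^*]$. Because $v\in\priv(D,u)$, the only neighbor of $v$ in $D$ is $u$; because $X_{uv}\subseteq \priv(D,u)\setminus N[v]$, no element of $X_{uv}$ is adjacent to $v$. So in $D'$, $v$ has no neighbor, and since $D^*\subseteq D'$ this remains true in $D^*$. Moreover, being self-private in $D'$, $v$ satisfies $v\notin Z_{uv}$, hence $v\in D^*$, so $v$ is indeed an isolated \emph{vertex} of $G[D^*]$.

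Next I would prove $E(G[D^*])\subseteq E(G[D])$. Every $x\in X_{uv}\cup\{v\}$ is self-private in $D'$, and this property is preserved when one deletes vertices from $D'$, so every such $x$ is isolated in $G[D^*]$. Consequently any edge of $G[D^*]$ lies entirely in $D^*\setminus(X_{uv}\cup\{v\})\subseteq D\setminus\{u\}\subseteq D$, hence belongs to $E(G[D])$.

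Finally, for strict inclusion, I would exhibit an edge of $G[D]$ that is absent from $G[D^*]$. By hypothesis there exists a vertex $w\in D$ with $uw\in E(G)$. I claim $u\notin D^*$: indeed $u$ is removed when forming $D'$, and the only way $u$ could reappear is via $X_{uv}\subseteq\priv(D,u)$, but $u$ is not its own private neighbor since $w\in N[u]\cap D\setminus\{u\}$. Therefore $uw\in E(G[D])\setminus E(G[D^*])$, completing the proof.

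The argument is largely bookkeeping and I do not foresee a serious obstacle; the one point that demands care is keeping the two different "privacy" references straight, namely $\priv(D,u)$ taken in the \emph{original} set $D$ versus the self-privacy of $X_{uv}\cup\{v\}$ taken in the \emph{intermediate} set $D'$, together with the monotonicity observation that self-privacy of $x$ in a set $S$ transfers to any subset of $S$ containing $x$.
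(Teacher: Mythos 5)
Your proof is correct and follows the same route the paper takes (the paper only sketches it in the sentence preceding the proposition: every element of $X_{uv}\cup\{v\}$ is self-private, hence isolated, in $D'$ and remains so in $D^*\subseteq D'$, so all edges of $G[D^*]$ live in $D\setminus\{u\}$, while the edge $uw$ disappears because $u\notin D'$). You correctly handle the two delicate points: that $v\notin Z_{uv}$ (self-privacy in $D'$ persists under deletion of other vertices, so $v$ is never added to $Z_{uv}$) and that $u$ cannot re-enter via $X_{uv}$ since $u\notin\priv(D,u)$.
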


The \emph{flipping operation} is then defined to be the reverse of how $D^*=\parent_{uv}(D)$ was generated from~$D$. 
This means, given $D^*$ with an isolated element $v\in D^*$, and $u$ a neighbor of $v$, the operation removes $X_{uv}\cup \{v\}$ and adds back $Z_{uv}\cup \{u\}$ to obtain a child~$D$ of $D^*$ with respect to flipping $u$ and~$v$. 
Obviously, the difficulty is to guess appropriate sets for $X_{uv}$ and $Z_{uv}$ when we are given only $D^*$, $u$, and $v$.

\subsection{The flipping method}\label{sec:flipping-method}

We now describe the flipping method as originally introduced in \cite{golovach2015flipping}.
Assume that there exists an algorithm $\algoa{}$ that, given $D^*\in \D(G)$, enumerates a family $\D$ of minimal dominating sets of $G$ such that $\child(D^*) \subseteq \D\subseteq \D(G)$. 
We stress the fact that $\D$ may contain minimal dominating sets that are not actual children of $D^*$.
The \emph{flipping method}, then, consists of a depth-first search (DFS) on a directed solution graph $\G$ whose nodes are minimal dominating sets of $G$, with one additional special node $r$, called the root, which has no in-neighbors.
The out-neighbors of the root are the maximal independent sets of $G$ (which are minimal dominating sets), and there is an arc from a minimal dominating set ${D^*\in V(\G)}$ to another one ${D\in V(\G)}$ if $\algoa{}$ generates $D$ from $D^*$.
At first, the DFS is initiated at the root.
Its out-neighbors are generated with polynomial delay using the algorithm of Tsukiyama et al.~\cite{tsukiyama1977new}.
The out-neighbors of the other nodes are generated using $\algoa{}$. 
Since $\algoa{}$ outputs (in particular) every child of a given node, we can argue using Proposition~\ref{prop:il-flip-golovach} that every minimal dominating set is reachable from $r$. 
More solutions may however be output by $\algoa{}$, and all the difficulty lies in handling the inherent repetitions.

In~\cite{golovach2015flipping} and later papers~\cite{golovach2016chordalbip,golovach2018lmimwidth}, a list of already visited nodes of $\G$ is maintained in order to handle repetitions, inexorably requiring space that is linear in $\D(G)$, thus potentially exponential in~$n$.
The achieved time complexity, on the other hand, is incremental-polynomial.

\begin{lemma}[\cite{golovach2015flipping}]\label{lemma:il-flip-golovach}
    Let $G$ be a graph.
    Suppose that there is an algorithm $\algoa{}$ that, given $D^*\in \D(G)$, enumerates with polynomial delay a family $\D$ of minimal dominating sets of $G$ such that $\child(D^*) \subseteq \D\subseteq \D(G)$.
    Then there is an algorithm that enumerates with incremental delay the set $\D(G)$ of all minimal dominating sets of $G$.
\end{lemma}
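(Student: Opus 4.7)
The plan is to follow the approach of \cite{golovach2015flipping}: run a depth-first search from $r$ on the directed supergraph $\G$, using Tsukiyama et al.~\cite{tsukiyama1977new} to enumerate the out-neighbors of $r$ (the maximal independent sets) and $\algoa$ to enumerate the out-neighbors of every other visited node. A set $V$ of already-visited minimal dominating sets is maintained; whenever a new out-neighbor is produced by $\algoa$, one tests membership in $V$ and either skips the duplicate or adds it to $V$, outputs it, and recursively descends into it. Since $\algoa$ and Tsukiyama's algorithm both run with polynomial delay, the children of any stack node can be produced incrementally.

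Correctness boils down to reachability: every $D \in \D(G)$ must be reached from $r$ in $\G$. I would prove this by induction on $|E(G[D])|$. If $|E(G[D])| = 0$ then $D$ is an independent set, hence (being a minimal dominating set) a maximal independent set, and therefore an out-neighbor of $r$. Otherwise there exists $u \in D$ with a neighbor in $D$, and by minimality some $v \in \priv(D,u)$; then $D$ is a child of $\parent_{uv}(D)$, and by Proposition~\ref{prop:il-flip-golovach} the parent satisfies $|E(G[\parent_{uv}(D)])| < |E(G[D])|$, so the induction hypothesis applies and the arc $\parent_{uv}(D) \to D$ is present in $\G$ by the hypothesis on $\algoa$.

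For the incremental-polynomial running time, suppose that $i$ minimal dominating sets have been output so far. The key observation is that within a single call $\algoa(D^*)$ the enumerated sets are pairwise distinct, so any duplicate produced by that call already lies in $V$; hence the number of iterator advances performed on $\algoa(D^*)$ up to the current moment is at most $|V| = i$, because every such advance either adds a new element to $V$ or hits one already in $V$. Since the DFS stack has depth at most $i+1$ (each stack node is a previously output minimal dominating set), the cumulative number of iterator advances performed by the whole algorithm is $O(i^2)$. Each advance costs $\poly(n)$ for the polynomial-delay step of $\algoa$ (or Tsukiyama's algorithm at $r$), plus $\poly(n,i)$ for the membership test in $V$, for a total cumulative running time of $\poly(n,i)$ up to the $i$-th output, i.e.\ incremental polynomial.

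The principal obstacle is the massive repetition that $\algoa$ can generate: the same child can be produced from many parents, so a naive search would report some minimal dominating sets an exponential number of times. The visited set $V$ provides a clean fix, but it blows the space up to $\Theta(|\D(G)|)$; removing this exponential-space requirement is exactly the improvement over~\cite{golovach2015flipping} that the subsequent Lemma~\ref{lemma:il-flip-golovach-improved} will deliver.
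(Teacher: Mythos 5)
Your proposal matches the paper's treatment of this lemma exactly: the paper (Section~\ref{sec:flipping-method}) attributes it to \cite{golovach2015flipping} and describes the same DFS on the supergraph $\G$, with Tsukiyama et al.\ at the root, reachability via induction on $|E(G[D])|$ using Proposition~\ref{prop:il-flip-golovach}, and a visited list to filter the repetitions produced by $\algoa{}$. Your explicit $O(i^2)\cdot\poly(n,i)$ accounting of iterator advances is sound and correctly identifies the exponential-space cost of the visited list as the weakness that Lemma~\ref{lemma:il-flip-golovach-improved} later removes.
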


We would like to mention that a similar proof allows for ``incremental delay'' instead of ``polynomial delay'' in the hypothesis of this statement.
We further strengthen the statement in the following. 

\subsection{A polynomial-space flipping method}\label{sec:polyflipping-method}

We show here that guiding the DFS toward the children, together with a folklore trick (see e.g.~\cite{bonamy2020kt, capelli2023geometric} for recent formalizations) on running the algorithm again at each output, allows us to handle repetitions with polynomial space at the cost of an increased---but still incremental-poly\-nomial---complexity.

The next lemma is central to the next section and may be regarded as a space improvement of Lemma~\ref{lemma:il-flip-golovach}.
It is also of general interest as far as the flipping method is concerned.
 
\begin{lemma}\label{lemma:il-flip-golovach-improved}
    Let $G$ be an $n$-vertex graph, and $p,s\colon \NN \to \NN$ be non-decreasing polynomial functions.
    Suppose that there is an algorithm $\algoa{}$ that, given $D^*\in \D(G)$, enumerates the $t$-th set of a family $\D$ satisfying $\child(D^*) \subseteq \D \subseteq \D(G)$ in time $p(n,t)$ and space $s(n)$.
    Then there is an algorithm that enumerates the $i$-th set of $\D(G)$ in
	\[
        \Oh(n^3)\cdot i^4 \cdot p(n,i)
    \] 
    time and $\Oh(n^2)\cdot s(n)$ space.
\end{lemma}

\begin{proof}
    In the following, let $\G'$ be the directed tree\footnote{This forms a subgraph of $\G$ as defined in Section~\ref{sec:flipping-method}: 
    Informally, the directed tree $\G'$ is what $\G$ would be if $\algoa{}$ was reliable, i.e., only generated children of $D^*$.} on vertex set $V(\G')=\D(G)\cup \{r\}$ and edge set $E(\G')=\{(r,D) \mid D\in \MIS(G)\}\cup \{(D^*,D) \mid D\in \child(D^*)\}$, where $r$ is a special vertex referred to as the \emph{root}. 
    	
    Let us first argue that every minimal dominating set $D$ is reachable from $r$ in $\G'$, by induction on the number of edges in it. 
    If $D$ contains no edge, it is a maximal independent set, thus an out-neighbor of $r$. 
    If $D$ contains an edge $uw$, consider such an edge with $u$ of smallest index in $D$, and its private neighbor $v$ of smallest index. 
    Let $D^*=\parent(D)$, i.e., $D^*$ is obtained by flipping $u,v$. 
    By Proposition~\ref{prop:il-flip-golovach}, $D^*$ has fewer edges than $D$ and is thus reachable from $r$.
    There is an arc from $D^*$ to $D$ in $\G'$, hence the conclusion.
    Therefore, a DFS of $\G'$ initiated at $r$ visits all minimal dominating sets of $G$. 
    
    Furthermore, for every minimal dominating set $D$, the length of the path from $r$ to $D$ is at most $|E(G[D])|\leq n^2$.

	We now describe an algorithm $\algob{}$ that enumerates, {possibly with repetitions}, the set $\D(G)$ of all minimal dominating sets of $G$.
	When $\algob{}$ outputs a set that was not output before we call this output a \emph{first occurrence}.
	The algorithm $\algob{}$ will output the $i$-th first occurrence (and so distinct solution) in
    \[
        \Oh(n^3)\cdot i^2\cdot p(n,i)
    \] 
    time and $\Oh(n^2) \cdot s(n)$ space.
    Algorithm $\algob{}$ proceeds as follows.
    First, it outputs every out-neighbor of $r$ without duplication using the algorithm of Tsukiyama et al.~in \cite{tsukiyama1977new}.
    Then, it proceeds with what boils down to a DFS (with additional outputs along the way) of $\G'$ initiated at $r$, as follows.  
    When visiting a node $D^*\in V(\G')$, $\algob{}$ seeks the children of $D^*$ by running $\algoa$.
    Each set $D$ returned by $\algoa$ is first output by $\algob{}$. 
    Then $\algob{}$ checks whether $D$ is a child of $D^*$. 
    If so, $\algob{}$ ``pauses'' the execution of $\algoa$ on $D^*$, and launches $\algoa$ on $D$. 
    When the execution of $\algoa$ on $D$ is complete, $\algob$ ``resumes'' the execution of $\algoa$ on $D^*$.

    Before, we discuss the time to compute first occurrences of $\algob{}$, 
    we take a pause to determine the following: 
    given $D$ and $D^*$ in $\G'$, how fast can we determine if $D$ is a child of $D^*$?
    The naive approach we choose goes as follows: 
    (1) consider the vertex of smallest index $u$ in $D$ such that there is an edge incident to $u$ in $G[D]$; 
    (2) select the vertex $v$ of smallest index in $\priv(D,u)$; 
    (3) perform the flipping operation along $u,v$;
    (4) check if the resulting set is $D^*$.
    Note that the selection of (1) and (2) can be done in $O(n^2)$ time.
    In order to make a flip when $u$ and $v$ are fixed, we need to compute the sets $X_{uv}$ and $Z_{uv}$. 
    The straightforward approach does it in $\Oh(n^3)$ time.
    Therefore, we can determine if $D$ is a child of $D^*$ in $\Oh(n^3)$ time and $\Oh(n^2)$ space 
    (as for convenience, we work with the adjacency matrix).
    
	We now examine the time spent by $\algob$ to produce its $i$-th first occurrence.
    The outputs generated by the algorithm of Tsukiyama et al., so the maximal independent sets of $G$, are all first occurrences and
	are produced with $\Oh(n^3)$ delay and $\Oh(n^2)$ space, hence within the claimed time bound; see~\cite{tsukiyama1977new}.

    Let $D$ be a first occurrence output by $\algob{}$ that is produced by a call of $\algoa$ on a node $D^*$ in $\G'$. 
    Say that $D$ is the ${(i-1)}$-th first occurrence in order output by $\algob{}$.
    Note that $D$ may or may not be a child of $D^*$.
    To obtain the next first occurrence output by $\algob{}$, 
    we consider the path from $r$ to $D$ in $\G'$ in case $D$ is a child of $D^*$, and the path from $r$ to $D^*$ in $\G'$ otherwise.
    The algorithm $\algob{}$ runs (or continue running) $\algoa{}$ on the last node of this path ($D$ or $D^*$), and
    for each node of the path, except $r$,
    $\algob{}$ has on the stack a paused execution of $\algoa{}$ called on the node.
    In the worst case scenario, all these executions will be resumed and each of them will spend at most $p(n,i-1)$-time, hence $p(n,i)$-time as $p$ is increasing, producing solutions that were already output by $\algob{}$ before.
    Moreover, if $D$ is a child of $D^*$, the execution of $\algoa{}$ on $D$ may produce a child $D'$ of $D$ that was already output by $\algob{}$ before.
    In that later case, the execution of $\algoa{}$ is in turn paused, a recursive call is made on $D'$, and so on.
    However, this can happen only $i$ times since the nodes of $\G'$ we explore that way are all distinct.
    So in total, at most $i$ such executions of $\algoa{}$ may, in a time bounded by $p(n,i)$ each, produce at most $i$ solutions that were already obtained by $\algob{}$. 
    Hence the time spent by $\algob{}$ between the $(i-1)$-th and the $i$-th first occurrences is bounded by $i\cdot p(n,i)$. 
    Each obtained set is checked to see if it is a child of the current node of $\G'$ in $\Oh(n^3)$ time. 
    Summing from $1$ to $i$, $\algob{}$ produces its $i$-th first occurrence or conclude that the $(i-1)$-th first occurrence was the last solution in a time which is bounded by $O(n^3)\cdot i^2\cdot p(n,i)$.
    Moreover, within this time bound, the total number of sets output by $B$ (with repetitions) is at most $i^2$.
	
	We note that since $p$ is a polynomial and $i\leq |\D(G)|\leq 2^n$, there is a small constant $c$ such that $\algob{}$ runs for at most $\poly(n+2^n) \leq 2^{cn}$ time.
	
	The space consumed by $\algob{}$ is dominated by the space taken by at most $n^2$ paused executions of $\algoa{}$ and the adjacency matrix of $G$. Thus $\algob{}$ runs in $\Oh(n^2) \cdot s(n)$ space.
	
	We are now ready to describe an algorithm $\algoc{}$ that enumerates $\D(G)$ without repetitions and within the desired time and space constraints.
	Algorithm $\algoc{}$ proceeds as follows.
	First, it launches a master instance of $\algob{}$. 
	It also maintains a counter keeping track of the number of steps (i.e., elementary steps counted by the time complexity) of the master instance of $\algob$. 
	Since $\algob{}$ runs for at most $2^{cn}$ steps, 
	a $cn$-bits long counter suffices.
	Whenever the master instance outputs a new set $D$, and the counter of steps indicates $j$, 
	the algorithm $\algoc{}$ launches a new instance of $\algob{}$, runs it for $j-1$ steps, and compares each of its output with $D$.
	The new instance of $\algob{}$ is killed after exactly $j-1$ steps.
	If $D$ did not appear as the output of the new instance, 
	then we conclude that it is a first occurrence of the master instance, and 
	the algorithm $\algoc{}$ outputs $D$. 
	If $D$ has appeared as one of the outputs of the new instance, then $\algoc{}$ ignores it and continues the simulation of the master instance.
	In that way, every set of $\D(G)$ is output by $\algoc$ without repetitions.

	We now examine the time spent by $\algoc$ to produce the $i$-th minimal dominating set.
	Recall that at most $i^2$ sets are output by the execution of $\algob$ up to the $i$-th first occurrence.    
	Thus, the number of new instances of $\algob$ launched by $\algoc$ is bounded by $i^2$.
	Every such instance runs for at most $\Oh(n^3) \cdot i^2\cdot p(n,i)$ total time, as $p$ is non-decreasing.
	Hence in total, $\algoc$ runs for at most
	\[
	    \Oh(n^3) \cdot i^2\cdot p(n,i) + 
	    i^2 \cdot \Oh(n^3) \cdot i^2\cdot p(n,i) = \Oh(n^3) \cdot i^4 \cdot p(n,i)
	\]
	time steps to produce the $i$-th solution.
	
	The space consumed by $\algoc{}$ is determined by the space required by at most two independent instances of $\algob$ running in the same time which is $2 \cdot \Oh(n^2) \cdot s(n)$, and the size of the counter which is $O(n)$. 
	Thus $\algoc$ runs in $\Oh(n^2) \cdot s(n)$ space.
	\end{proof}

The algorithms given in~\cite{golovach2015flipping,golovach2016chordalbip,golovach2018lmimwidth} for line graphs, graphs of girth at least 7, chordal bipartite graphs, and unit-square graphs rely on the flipping method and run in incremental-polynomial time and exponential space. By directly plugging in Lemma~\ref{lemma:il-flip-golovach-improved} instead of Lemma~\ref{lemma:il-flip-golovach} in the procedure, we obtain the following.

\begin{corollary}\label{cor:flipamor}
    There is an incremental-polynomial time and polynomial-space algorithm enumerating minimal dominating sets in line graphs, graphs of girth at least 7, chordal bipartite graphs, and unit-square graphs.
\end{corollary}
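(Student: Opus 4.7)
My plan is to derive the corollary as a direct substitution into the existing incremental-polynomial time algorithms from \cite{golovach2015flipping, golovach2016chordalbip, golovach2018lmimwidth}. Each of those algorithms is built around the flipping method: for the graph class $\mathcal{C}$ in question (line graphs, graphs of girth at least 7, chordal bipartite graphs, unit-square graphs), the authors exhibit an algorithm $\algoa{}$ which, on input $D^* \in \D(G)$, enumerates with polynomial delay some family $\D$ with $\child(D^*) \subseteq \D \subseteq \D(G)$. They then feed $\algoa{}$ into the original flipping machinery (Lemma~\ref{lemma:il-flip-golovach}) and obtain incremental-polynomial time but, crucially, exponential space, because that machinery stores visited nodes of $\G$ and keeps the DFS stack.

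The first step is to reread each of the four papers and verify the following property of their respective $\algoa{}$: between consecutive outputs, the algorithm uses only polynomial working space (in addition to the representation of $D^*$ and of $G$). For each of these classes the procedure boils down to a bounded-depth enumeration guided by structural witnesses (matchings for line graphs, specific induced subgraphs for chordal bipartite or girth-$7$ graphs, geometric certificates for unit-square graphs), and nothing in those procedures intrinsically requires memoization; every branch can be re-explored from scratch. So each of them already satisfies the hypothesis of Lemma~\ref{lemma:il-flip-golovach-improved} with a polynomial $p$ and polynomial $s$, possibly after minor rewriting to ensure that partial solutions are not cached.

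Once that verification is done, the corollary is immediate: feeding such an $\algoa{}$ into Lemma~\ref{lemma:il-flip-golovach-improved} instead of Lemma~\ref{lemma:il-flip-golovach} yields an enumeration algorithm for $\D(G)$ with delay polynomial in $n$ and in the number $i$ of sets already output (hence incremental-polynomial time) and with polynomial space $\Oh(n^2)\cdot s$.

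The main obstacle I expect is the space-audit of the four procedures. The original papers optimize for time complexity and sometimes describe $\algoa{}$ as maintaining auxiliary tables or explicit lists of candidates; those must be reformulated so that the enumeration proceeds by backtracking, allocating only polynomial memory at any time. For line graphs this is straightforward (the inner enumeration reduces to listing matchings in a subgraph), and for chordal bipartite, girth-$7$ and unit-square graphs the certificates used by the authors are of polynomial size, so a polynomial-space variant of $\algoa{}$ follows by standard flashlight-style implementations of the same search. After this routine but careful bookkeeping, the plug-in into Lemma~\ref{lemma:il-flip-golovach-improved} gives Corollary~\ref{cor:flipamor}.
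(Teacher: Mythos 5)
Your proposal matches the paper's argument exactly: the corollary is obtained by substituting Lemma~\ref{lemma:il-flip-golovach-improved} for Lemma~\ref{lemma:il-flip-golovach} in the existing flipping-method algorithms of \cite{golovach2015flipping,golovach2016chordalbip,golovach2018lmimwidth}, whose inner routines $\algoa{}$ already run with polynomial delay and polynomial space. Your extra care in auditing the space usage of each $\algoa{}$ is a point the paper leaves implicit, but it does not change the argument.
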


\section{Flipping method in comparability graphs}\label{sec:flipping}

We now show how the flipping method, and more particularly the existence of an algorithm as required in Lemma~\ref{lemma:il-flip-golovach-improved}, can be reduced to red-blue domination in comparability graphs.

Recall that Lemma~\ref{lemma:il-flip-golovach-improved} is stated for general graphs and that the family $\D$ to be constructed can contain arbitrarily many solutions that are not actual children.
In~\cite{golovach2015flipping},~\cite{golovach2016chordalbip} and~\cite{golovach2018lmimwidth}, the authors were able to provide such an algorithm $\algoa{}$ in line graphs, graphs of girth seven, chordal bipartite graphs and unit square graphs. 
In these last two cases, they proved that to obtain an efficient $\algoa{}$, it suffices to design an efficient algorithm that enumerates all the minimal red dominating sets of an appropriate subgraph within the same class.
We conduct a similar analysis to show that, in comparability graphs, it suffices to design an efficient algorithm that enumerates all the minimal red dominating sets of a subgraph in which blue vertices are minimal with respect to the associated poset. 

Intuitively, the reason is that when flipping two adjacent vertices $u,v$ with, say, $u\leq v$, the set of private neighbors $Y$ that the set $X_{uv}\cup \{v\}$ may steal are by definition incomparable to $u$, hence smaller than $v$.
Moreover, the vertices in $Z_{uv}$ that had their private neighbors within $Y$ are non-adjacent to $v$, hence larger that $Y$ in the poset.
Consequently, in the reverse operation, and in order to guess the appropriate sets $X_{uv}$ and $Z_{uv}$, we may focus on minimal sets that dominate $Y$ from ``above'' and hence may restrict ourselves to the maximal elements in $Y$.
This yields the desired red-blue domination instance.

\begin{figure}
    \centering
    \includegraphics[scale=1.3]{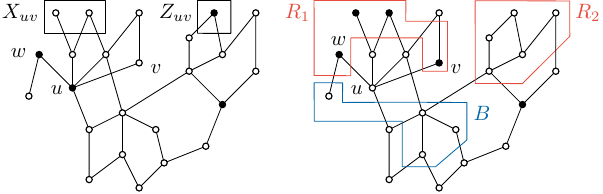}
    \caption{A minimal dominating set $D$ (on the left) and its parent $D^*=\parent_{uv}(D)$ (on the right) represented by black vertices in the Hasse diagram of the underlying poset of a comparability graph.
    The edge induced by $D$ and incident to $u$ is depicted by $uw$.}
    \label{fig:flipping}
\end{figure}

\needspace{4cm}

\begin{lemma}\label{lemma:flipping-to-redblue}
    Let $G$ be the comparability graph of a poset $P=(V,\leq)$.
    Suppose that there is an algorithm $\algob{}$ that, given an antichain $B$ of $P$ and a set ${R\subseteq\upa{B}\setminus B}$, enumerates with polynomial delay and polynomial space the set $\D(R,B)$ of minimal red dominating sets of $G(R, B)$.
    Then there is an algorithm $\algoa'$ that, given $D^*\in \D(G)$ and $u,v\in V(G)$, enumerates with polynomial delay and polynomial space a family $\D\subseteq \D(G)$ containing all $D$ such that $D^*=\parent_{uv}(D)$.
\end{lemma}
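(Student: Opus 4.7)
The plan is to construct $\algoa'$ as an invertor of the flipping operation of Section~\ref{sec:flipping-operation}: given $D^*,u,v$, I first characterise what any child $D$ of $D^*$ must look like, then enumerate its ``free'' part as a minimal red dominating set of an appropriately constructed red-blue instance, handed off to $\algob$.

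The first step is structural. From the flip's construction, $X_{uv}\cup\{v\}\subseteq D^*$ is independent in $G$ and has no neighbor in $D^*\setminus(X_{uv}\cup\{v\})$, so every element of $X_{uv}\cup\{v\}$ is isolated in $G[D^*]$. Letting $I$ denote the set of vertices isolated in $G[D^*]$, this gives $X_{uv}\cup\{v\}\subseteq I$. In particular $D\supseteq D^*\setminus I$, and $D$ differs from $(D^*\setminus\{v\})\cup\{u\}$ only by (a)~dropping some subset of $I\setminus\{v\}$, and (b)~adding some vertices from $V\setminus D^*\setminus\{u\}$. Since $u$ and $v$ are adjacent they are comparable in $P$, and by duality I may assume $v<u$, so that $\doa{v}\subseteq N[u]$, a fact used throughout.

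I then construct the red-blue instance. Set $F:=(D^*\setminus I)\cup\{u\}$ and $U:=V\setminus N[F]$; using $v<u$ and the dominance of $D^*$, every element of $U$ comparable to $v$ lies strictly above $v$. Define $B:=\Min(U)$ and $R:=(V\setminus F\setminus\{v\})\cap(\upa{B}\setminus B)$; then $B$ is an antichain of $P$, $R\subseteq \upa{B}\setminus B$, and $(G,R,B)$ is a valid input for $\algob$. The key order-theoretic observation is that any $r\in R$ dominating some $b\in B$ must satisfy $r>b$: otherwise $r<b$ while $r\geq b'$ for some $b'\in B$, contradicting the antichain property of~$B$. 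The algorithm $\algoa'$ then invokes $\algob$ on $(G,R,B)$; for each output $Z\in\D_G(R,B)$ it forms a candidate $D:=F\cup Z\cup T$, where $T\subseteq I\setminus\{v\}$ is the subset of isolated-in-$G[D^*]$ vertices that retain a private neighbor after adding $Z$ (thus modelling the choice of $X_{uv}$ as $I\setminus(\{v\}\cup T)$), then checks in polynomial time whether $D\in\D(G)$ and outputs $D$ if so. For completeness, given a child $D$ of $D^*$ one shows that $Z:=D\cap R$ is a minimal red dominating set of $G(R,B)$ by combining the minimality of $D$ with the above order analysis, and that the post-processing recovers the correct $T$.

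The main obstacle lies in the interaction between $X_{uv}$ and $Z_{uv}$: the set $U$ implicitly depends on which isolated vertices of $D^*$ are retained in $D$, which in turn depends on $Z_{uv}$ via the definition of $X_{uv}$ as the redundant part. Reconciling the two requires showing that defining $U$ with respect to $F$ alone, and recovering $X_{uv}$ by a deterministic post-processing step, is both sound (enforced by the explicit $D\in\D(G)$ check) and complete (every child is produced). Polynomial delay of $\algoa'$ then follows from polynomial delay of $\algob$, since each $\algob$-output triggers only polynomial-time post-processing and yields at most one $\algoa'$-output.
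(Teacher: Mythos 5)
Your high-level plan---inverting the flip by handing a red-blue instance to \algob{}---is indeed the paper's approach, but the instance you build is not the right one, and the gap you yourself flag (``the interaction between $X_{uv}$ and $Z_{uv}$'') is real and is not resolved by your post-processing. The first concrete problem: you remove \emph{all} vertices isolated in $G[D^*]$ when forming $F$, whereas only the isolated vertices comparable to $u$ on the side of $v$ can ever play the role of $X_{uv}\cup\{v\}$ (an $x\in X_{uv}$ on the other side of $u$ would lie in $N[v]$, contradicting $X_{uv}\cap N[v]=\emptyset$). Consequently your undominated set $U$ contains vertices whose only dominator in $D^*$ is an isolated vertex $w$ that is \emph{retained} in every child $D$; such a blue vertex needs no new dominator, and its actual dominator $w$ typically lies outside your $R$ (it may equal the blue vertex, lie in $B$ itself, or sit on the wrong side of it). Hence for a genuine child $D$ the set $D\cap R$ in general fails to dominate $B$, is never output by \algob{}, and the child is never produced: soundness survives thanks to your explicit $D\in\D(G)$ check, but completeness breaks. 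The paper avoids this by taking $R_1$ to be only the self-private vertices of $D^*$ comparable to $u$ towards $v$, setting $B=V\setminus N[(D^*\setminus R_1)\cup\{u\}]$, and showing that every child's difference set $(R_1\setminus(X_{uv}\cup\{v\}))\cup Z_{uv}$ is then \emph{exactly} a minimal red dominating set.

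The second problem is an orientation inconsistency. With your convention $v<u$ one gets $X_{uv}\cup\{v\}\subseteq\doa{u}$, so every vertex that loses its dominator lies \emph{above} the removed vertices (anything below them is below $u$ and stays dominated), and the candidate dominators---the retained removed vertices and $Z_{uv}$---therefore lie \emph{below} the blue set. Defining $R\subseteq\upa{B}\setminus B$ with $B=\Min(U)$ thus excludes precisely the vertices that must be red; in particular $Z_{uv}\not\subseteq R$, so again $D\cap R$ cannot witness the child. You must either flip the poset upside down before invoking \algob{} (as the paper does) or dualize $\Min/\Max$ and $\upa{}/\doa{}$ consistently. Finally, your deterministic recovery of $T$ yields at most one candidate per $Z$, so you would additionally need to prove that the map from children to red dominating sets is injective and that no greedy reduction is needed on genuine children (the paper's Claims~\ref{claim:distinct} and~\ref{claim:child}); as written this is asserted rather than shown.
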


\begin{proof}
    The proof is conducted in the fashion of~\cite{golovach2016chordalbip}.
    We are given a minimal dominating set $D^*$ of $G$, an isolated vertex $v$ of $G[D^*]$, and a neighbor $u$ of $v$.
    Let us assume in the following that $u\leq v$ in $P$.
    The dual situation is handled by flipping upside-down the poset.
    This situation is depicted in Figure~\ref{fig:flipping} (right).
    We aim to compute using $\algob$ a family of sets $\D$ such that $\{D\in \D(G) \mid D^*=\parent_{uv}(D)\}\subseteq \D\subseteq \D(G)$.
    Recall that the difficulty of computing such sets given $D^*$, $u$, and $v$, lies in guessing appropriate sets for $X_{uv}$ and $Z_{uv}$ in the definition of the $\parent$ relation; see Section~\ref{sec:flipping-operation}.
    To this purpose, we now define sets that we will then show to be containing $X_{uv}$ and~$Z_{uv}$.
    
    Let $R_1\subseteq \upa{u}\cap D^*$ be the set of upper-neighbors $x$ of $u$ in $D^*$ such that $x\in \priv(D^*,x)$, i.e., these vertices are self-private in $D^*$. 
    Note that in particular, $R_1$ is an antichain of $P$, and it contains $v$.
    Let $B= V(G) \setminus N[(D^*\setminus R_1) \cup \{u\}]$ be the set of all vertices that are not dominated anymore when replacing $R_1$ with $u$ in $D^*$. 
    Note that $B \subseteq \doa{R_1}\setminus R_1$ since $R_1\subseteq \upa{u}$.
    Finally, let $R_2=\upa{B}\setminus N[R_1]$. 
    In particular, there are no edges between $R_1$ and $R_2$.
    Let us finally set $R=(R_1 \setminus \{v\}) \cup R_2$, and note that $B\subseteq \doa{R}\setminus R$.
    Informally, $R$ forms the set of vertices we can use to dominate $B$.
    We exclude $v$ from that set, since the whole point of the operation is to delete $v$. 
    Note that $B$ and $R$ may be empty; we will focus on these cases later.
    Also $B$ is not necessarily an antichain.
    We now show that we can restrict our attention to the maximal elements of $B$, and to the part of $R$ that is adjacent to these elements.

    The notation $\D(R\cap \upa{\Max(B)}, \Max(B))$ below is blatantly cumbersome. 
    To simplify the upcoming arguments, we prove that $\D(R\cap \upa{\Max(B)}, \Max(B))$ is in fact equal to the conceptually simpler $\D(R,B)$.

    \begin{claim}\label{claim:red-blue}
        The sets $\D(R, B)$ and $\D(R\cap \upa{\Max(B)}, \Max(B))$ are equal.
    \end{claim}
    
    \begin{proof}
        Recall that $B\subseteq \doa{R} \setminus R$.
        We first note that every dominating set of $\Max(B)$ in $R$ is a dominating set of $B$. 
        Indeed, for every $x \in B$ there exists $y \in \Max(B)$ such that $x\leq y$, and for any $x \in B$, $y \in \Max(B)$, and $z \in R$, if $x\leq y$ and $yz$ is an edge, then $y\leq z$ and $xz$ is an edge. 
        This guarantees $\D(R, \Max(B)) \subseteq \D(R, B)$ as if an element of $R$ has a private neighbor in $\Max(B)$, then it has in particular a private neighbor in $B$.

        The converse inclusion is straightforward in the sense that every dominating set of $B$ is in particular a dominating set of $\Max(B)$. As we argued above, any subset of $R$ that dominates $\Max(B)$ dominates $B$. Therefore, by minimality, no proper subset of a set in $\D(R,B)$ dominates $\Max(B)$. Hence $\D(R, B) \subseteq \D(R, \Max(B))$ and so $\D(R, B)=\D(R, \Max(B))$.
        
        We conclude by observing that the elements in $R\setminus \upa{\Max(B)}$ are not adjacent to the elements in $\Max(B)$, hence that they do not appear in any minimal dominating set of $G(R,B)$.
        Consequently $\D(R, B)=\D(R\cap \upa{\Max(B)}, \Max(B))$.
        \cqed
    \end{proof}

    Let us now describe $\algoa'$. 
    First if $R=\emptyset$ and $B\neq \emptyset$, then $\D(R,B)=\emptyset$ and we return $\D=\emptyset$ as no $D$ satisfies $D^*=\parent_{uv}(D)$.
    Indeed in that case, $R_1=\{v\}$, $B$ consists of the vertices that are not dominated anymore when replacing $v$ with $u$ in $D^*$, and since $B\subseteq \doa{v}$ and $R_2=\upa{B}\setminus N[v]=\emptyset$, every attempt to dominate $B$ by adding vertices to $D'=(D^*\setminus \{v\})\cup \{u\}$ will remove $v$ from the private neighbors of $u$ in $D'$.
    Since by definition $v$ is selected as a private neighbor of $u$ during the flipping operation, flipping $u$ and $v$ cannot produce a child of $D^*$ in that case. 
    Otherwise $\D(R,B)\neq \emptyset$, and we enumerate all minimal red dominating sets in $\D(R,B)$ using $\algob$ with Claim~\ref{claim:red-blue}.
    For each minimal red dominating set $X\in \D(R, B)$, we consider the set $D'=(D^*\setminus R_1)\cup \{u\} \cup X$ of vertices of $G$. 
    Note that $X$ may be empty, in which case $B= \emptyset$ and $\D(R, B)=\{ \emptyset \}$; that is not an issue. 
    We greedily reduce $D'$ into an irredundant set $D$ of $G$, and output $D$.
    
    This may seem counter-intuitive in an enumeration context, as a greedy reduction typically does not explore all options. 
    However, we will argue later (see~Claim~\ref{claim:child}) that $D'$ is already irredundant in all relevant cases, so $D=D'$ and the greedy reduction does not affect the pool of children.

    Let $\D$ be the multiset of all generated sets; we prove in the following four claims that $\D$ is in fact a set (no repetitions are produced) and that it has the desired properties. Namely, the correctness of $\algoa'$ follows from Claims~\ref{claim:distinct} and~\ref{claim:child}. We conclude the proof with the complexity analysis of $\algoa'$.
    
    \begin{claim}\label{claim:distinct}
    All outputs produced by the above procedure are distinct. Moreover the elements of the set $\D$ are minimal dominating sets of $G$ and $|\D|=|\D(R,B)|$.
    \end{claim}
    
    \begin{proof}
        As there is a natural bijection between minimal red dominating sets $\D(R,B)$ and distinct outputs of $\D$, we only need to argue two things: that every output is a minimal dominating set, and that there is no repetitions.
        
        There is nothing to argue in the case where $\D(R,B)=\emptyset$, and we assume from now on that $\D(R,B)$ is non-empty.    Let $X \in \D(R,B)$. To argue that its corresponding output is a minimal dominating set, it suffices to argue that $(D^*\setminus R_1)\cup \{u\} \cup X$ is a dominating set since we output an irredundant subset of it. Let $w$ be a vertex not dominated by $(D^*\setminus R_1)\cup \{u\}$. By definition, it belongs to $B$, so $w \in N[X]$. Therefore, $(D^*\setminus R_1)\cup \{u\} \cup X$ is a dominating set, as desired.
        
        Finally, observe that when greedily reducing $(D^*\setminus R_1)\cup \{u\} \cup X$ into a minimal dominating set $D$, we maintain $X \subseteq D$ as each element of $X$ has a private neighbor in $B$. In fact, we have $D \cap R = X$, which guarantees that a different choice of $X$ would yield a different output~$D$.
        \cqed
    \end{proof}

    The next claim shows that the sets $X_{uv}$ and $Z_{uv}$ to guess lie in $R_1$ and $R_2$, respectively; see Figure~\ref{fig:flipping-explained} for an example.

    \begin{claim}\label{claim:child1}
        For any set $D\in \D(G)$ such that $D^*=\parent_{uv}(D)$, let $X_{uv}$ and $Z_{uv}$ be the disjoint sets defined in the $\parent_{uv}$ relation, so that $D = (D^* \cup  \{u\} \cup Z_{uv}) \setminus (X_{uv} \cup \{v\})$. We have $X_{uv} \subseteq R_1 \setminus \{v\}$ and $Z_{uv}\subseteq R_2$. Additionally, for $\smash{Y_{uv}=\bigcup_{z\in Z_{uv}}\priv(D,z)}$, we have $Y_{uv} \subseteq B$.
    \end{claim}
    
    \begin{proof}
        Recall that $Z_{uv}$ is defined as a set of vertices that lose their private neighbors with respect to $D$ when adding $X_{uv}\cup \{v\}$ to $D\setminus \{u\}$.
        These private neighbors are the elements of the set $Y_{uv}$.
        
        By definition of the $\parent$ relation, $G[D]$ contains an edge $uw$, and $v$ is selected in the set $\priv(D,u)$.
        Since $uw$ is an edge, one of $u\leq w$ and $w \leq u$ holds. As $v \in \priv(D,u)$ and $w \in D$, the vertices $v$ and $w$ are incomparable. Since $u \leq v$, the case $w \leq u$ would lead to a contradiction, and we derive $u \leq w$.
        
        Let us first argue that $X_{uv} \subseteq R_1 \setminus \{v\}$. We have $v \not\in X_{uv}$, so we focus on proving $X_{uv} \subseteq R_1$. Recall that $X_{uv}\subseteq \priv(D,u)\setminus N[v]$ by definition. Since $u \leq v$, $X_{uv}$ cannot be in $\downarrow u$ and so we derive $X_{uv} \subseteq \upa{u} \cap D^*$. It remains to argue that $x \in \priv(D^*,x)$ for every $x \in X_{uv}$. Since $X_{uv}$ is an independent set by construction, we have $x \in \priv(X_{uv},x)$. Since $X_{uv} \cap N[v] = \emptyset$, we have  $x \in \priv(X_{uv} \cup \{v\},x)$. Since $X_{uv} \subseteq \priv(D,u)$, we derive $x \in \priv(X_{uv} \cup \{v\} \cup (D \setminus \{u\}),x)$, and so $x \in \priv(D^*,x)$ as $D^*\subseteq X_{uv} \cup \{v\} \cup (D \setminus \{u\})$.
        Hence $x \in R_1$ and it follows that $X_{uv}\subseteq R_1$.
        
        Let us now argue that $Z_{uv} \subseteq R_2$ and $Y_{uv} \subseteq B$. 
        Consider $z\in Z_{uv}$, and a private neighbor $y$ of $z$ with respect to $D$. Note that $y \in Y_{uv}$ and that $y$ is considered without loss of generality since every element of $Y_{uv}$ is the private neighbor of some element in $Z_{uv}$ with respect to $D$. Therefore, it suffices to argue that $z \in R_2$ and $y \in B$. 
        Recall that $z$ has no private neighbor with respect to $(D\setminus \{u\})\cup X_{uv}\cup \{v\}$, though $y$ is a private neighbor of $z$ with respect to $D$, which contains $u$.
        Since $X_{uv}\cup\{v\}$ and $Z_{uv}$ are non-adjacent, it follows that $z$ is not self-private (so $y\neq z$), and that $y$ is in the neighborhood of $X_{uv}\cup \{v\}$, but not in that of $u$.
        Since $X_{uv} \cup \{v\} \subseteq \upa{u}$ and $y \not \in N(u)$, we have $y\not\in \upa{(X_{uv}\cup \{v\})}$.
        We derive that $y\in \doa{(X_{uv}\cup \{v\})}\setminus N[u]$.
        If $y\geq z$, then $z\in \doa{(X_{uv}\cup \{v\})}$, which contradicts the fact that $X_{uv}\cup\{v\}$ and $Z_{uv}$ are non-adjacent.
        So $y\leq z$.
        Observe now that $z\not\in N[R_1]$.
        Otherwise as $z\not\in N[X_{uv}\cup\{v\}]$, either $z\in \doa{R_1\setminus (X_{uv}\cup\{v\})}$ and then $y\in \doa{R_1\setminus (X_{uv}\cup\{v\})}$, or $z\in \upa{R_1\setminus (X_{uv}\cup\{v\})}$ and there is an element $r$ in $R_1\setminus (X_{uv}\cup\{v\})$ such that $u\leq r\leq z$, hence such that $\priv(\{u,r,z\},r)=\emptyset$.
        These two situations lead to a contradiction since the elements of $R_1\setminus (X_{uv}\cup\{v\})$ are in $D$.
        So $z\not\in N[R_1]$ and since $R_2=\upa{B} \setminus N[R_1]$, the fact that $z \in R_2$ follows from $y \in B$, which we argue now.
        
        We have $N(y) \cap D^* \subseteq X_{uv}\cup\{v\}$, as the only neighbor of $y$ in $D$ is $z$. As shown earlier, $X_{uv} \subseteq R_1$, hence $N(y)\cap D^*\subseteq R_1$.
        Since $u \not\in N(y)$ and $B=V(G) \setminus N[(D^* \setminus R_1) \cup \{u\}]$, we derive $y\in B$, as desired. It follows that $Y_{uv} \subseteq B$ and  $Z_{uv}\subseteq R_2$.
        \cqed
    \end{proof}

    The next claim shows that the set $X_{uv}$ to remove from $R_1$ and the set $Z_{uv}$ to find in $R_2$ can be found by computing minimal red dominating set of $G(R,B)$.

    \needspace{2cm}
        
    \begin{claim}\label{claim:child2}
        For any set $D\in \D(G)$ such that $D^*=\parent_{uv}(D)$, let $X_{uv}$ and $Z_{uv}$ be the disjoint sets defined in the $\parent_{uv}$ relation, so that $D = (D^* \cup  \{u\} \cup Z_{uv}) \setminus (X_{uv} \cup \{v\})$. Then the set $R_1  \cup Z_{uv} \setminus (X_{uv} \cup \{v\})$ is a minimal red dominating set of $G(R,B)$.
    \end{claim}
    
    \begin{proof}
        Let $T=R_1  \cup Z_{uv} \setminus (X_{uv} \cup \{v\})$.
        By Claim~\ref{claim:child1}, we obtain that $T \subseteq (R_1 \setminus \{v\}) \cup R_2=R$. Therefore, it only remains to argue two things: that $T$ dominates $B$, and that $T$ is minimal, i.e., that every vertex in $T$ has a private neighbor in $B$ with respect to $T$.
        
        Let $y\in B$.
        By definition of $B$, we have $N(y)\cap D^*\subseteq R_1\cup \{v\}$ and $y\not\in N[u]$.
        Since $D$ dominates $y$, and given how $D$ and $D^*$ relate, the vertex $y$ has a neighbor either in $R_1\setminus (X_{uv}\cup \{v\})$ or in $Z_{uv}$. 
        In either case, $y$ has a neighbor in $T$. We conclude that $T$ dominates $B$.
        
        Let us now argue that every vertex $x$ in $T$ has a private neighbor in $B$ with respect to $T$. Note that $T\subseteq D$, hence that $x\in D$ and ${\priv(D,x)\neq\emptyset}$. 
        
        Let us first consider the case $x \in R_1 \setminus (X_{uv} \cup \{v\})$.  Since $u \in D$, we have $\priv(D,x)\subseteq N[x]\setminus N[u]$. Since moreover $x\in \upa{u}$ we have that $\priv(D,x)\subseteq \doa{x}\setminus N[u]$.
        In particular $x\not\in \priv(D,x)$.
        Let $y\in \priv(D,x)$.
        Then $N(y)\cap D=\{x\}$ and so $N(y)\cap D^*\subseteq \{x\}\cup X_{uv}\cup \{v\}\subseteq R_1$.
        We conclude by definition of $B$ that $y\in B$. 
        Therefore, every vertex in $T \cap (R_1 \setminus (X_{uv} \cup \{v\}))$ has a private neighbor in $B$ with respect to $T$. 
        
        We now consider the case $x\in Z_{uv}$. 
        Let $y \in \priv(D,x)$. 
        As $N(y)\cap D=\{x\}$ and $x\not\in D^*$, we have $N(y)\cap D^*\subseteq X_{uv}\cup\{v\}\subseteq R_1$.
        Hence $y\in B$.
        Consequently every $x\in T$ has a private neighbor in $B$, and so $T\in \D(R,B)$.
        \cqed
    \end{proof}    
    
    The core statement now follows easily.     
    
    \begin{claim}\label{claim:child}
        The set $\D$ contains every $D\in \D(G)$ such that $D^*=\parent_{uv}(D)$.
    \end{claim}
    
    \begin{proof}
        Let $D\in\D(G)$ be such that $D^*=\parent_{uv}(D)$.
        Then $D^*=((D \setminus \{u\}) \cup X_{uv} \cup \{v\})\setminus Z_{uv} $, where $X_{uv}$ and $Z_{uv}$ are the disjoint sets defined in the $\parent$ relation. 
        Consider the set $\smash{Y_{uv}=\bigcup_{z\in Z_{uv}}\priv(D,z)}$.
        
        From Claim~\ref{claim:child2}, we obtain that $R_1 \cup Z_{uv} \setminus (X_{uv} \cup \{v\})$ is a minimal dominating set of $G(R,B)$. Consequently, $\algob$ outputs $R_1 \cup Z_{uv} \setminus (X_{uv} \cup \{v\})$, which prompts $\algoa'$ to consider the set $(D^* \setminus R_1) \cup \{u\} \cup (R_1 \cup Z_{uv} \setminus (X_{uv} \cup \{v\}))=(D^* \cup \{u\} \cup Z_{uv})\setminus (X_{uv} \cup \{v\})=D$ as a candidate to be output after being greedily reduced into an irredundant set. Note that the set is already irredundant, so $D$ is generated. In other words, we have $D\in \D$ as desired.
        \cqed
    \end{proof}
    
    Each of the sets $R_1$, $R_2$, $R$ and $B$ can be constructed in polynomial time in $n$.
    The same goes for computing and reducing $D'$ into a minimal dominating set given $T\in \D(R,B)$.
    In addition, $R\cap \upa{\Max(B)}$ and $\Max(B)$ can be computed in polynomial time in $n$, and by Claim~\ref{claim:red-blue} the set $\D(R,B)$ can be generated with polynomial delay using $\algob$, concluding the proof.
\end{proof}

We conclude this section with the following corollary of Lemma~\ref{lemma:il-flip-golovach-improved} and Lemma~\ref{lemma:flipping-to-redblue}, observing that the antichain $B$ of a poset $P$ is minimal in $P[\upa{B}]$. 
Note that while the algorithm $\algoa'$ only computes the children $D$ for a fixed pair $u,v$, there are less than $n^2$ such pairs.
Checking for each candidate $D$ if it is an actual child (for a minimal pair $u,v$) is done in polynomial time as argued in Section~\ref{sec:flipping-operation}.
Moreover, to avoid repetitions, we only output a child $D$ if the smallest pair of vertices $a,b$ in $D$ such that $D^*=\parent_{ab}(D)$ coincides with $u,v$.
The obtained algorithm performs with polynomial delay (hence in incremental-polynomial time) and polynomial space, as desired.

\begin{theorem}\label{theorem:flipping-to-redblue}
    Let $\mathcal{G}$ be a graph class where every graph is comparability. If there is an incre\-mental-polynomial time and polynomial-space algorithm enumerating minimal red dominating sets in red-blue graphs of $\mathcal{G}$ whose blue vertices are minimal with respect to the associated poset, then there is one enumerating minimal dominating sets in graphs of $\mathcal{G}$.
\end{theorem}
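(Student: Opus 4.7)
The plan is to chain Lemma~\ref{lemma:flipping-to-redblue} with Lemma~\ref{lemma:il-flip-golovach-improved} after verifying that the red-blue instances arising inside the former match the hypothesis of Theorem~\ref{theorem:flipping-to-redblue}. The key geometric observation, already hinted at in the statement, is that any antichain $B$ of $P$ is minimal in $P[\upa{B}]$: if $x\in \upa{B}$ satisfies $x\leq b$ for some $b\in B$, then $x\in \upa{B}$ gives $b'\leq x\leq b$ for some $b'\in B$, and the antichain property of $B$ forces $b'=b=x$. Applying this to $\Max(B)$ (which is always an antichain) and invoking Claim~\ref{claim:red-blue} to reduce $\D(R,B)$ to $\D(R\cap \upa{\Max(B)},\Max(B))$, the hypothesis algorithm $\algob{}$ of Theorem~\ref{theorem:flipping-to-redblue} applies to every red-blue instance produced inside the proof of Lemma~\ref{lemma:flipping-to-redblue}.

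Next I would invoke Lemma~\ref{lemma:flipping-to-redblue} to obtain, for every $D^*\in \D(G)$ and every pair $(u,v)$, an algorithm $\algoa'{}$ enumerating a family $\D_{uv}\subseteq \D(G)$ containing each child of $D^*$ with respect to flipping $u$ and $v$. Although Lemma~\ref{lemma:flipping-to-redblue} is stated for a polynomial-delay $\algob{}$, its proof extends to the incremental-polynomial regime: Claim~\ref{claim:distinct} provides a bijection between the outputs of $\algob{}$ and those of $\algoa'{}$, so up to polynomial per-output overhead the delay of $\algoa'{}$ inherits the character of the delay of $\algob{}$ (this is the same observation as the one made after Lemma~\ref{lemma:il-flip-golovach}). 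Concatenating the runs of $\algoa'{}$ over all $\Oh(n^2)$ pairs $(u,v)$ yields, for fixed $D^*$, a family $\D\supseteq \child(D^*)$ enumerated in incremental-polynomial time and polynomial space, with every child possibly produced up to $n^2$ times.

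To remove these repetitions I would reuse the rerun trick from Section~\ref{sec:polyflipping-method}: after each candidate output $D$, replay the concatenated procedure from scratch up to the current step count and emit $D$ only if it does not reappear during the rerun. This preserves polynomial space and keeps the delay within an incremental-polynomial bound. Plugging the resulting deduplicated algorithm into Lemma~\ref{lemma:il-flip-golovach-improved}, with the non-decreasing function $p$ absorbing the incremental-polynomial character, produces an incremental-polynomial time, polynomial-space algorithm enumerating $\D(G)$, as desired.

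The main obstacle is bookkeeping rather than any genuine new idea: one needs to verify that the proof of Lemma~\ref{lemma:flipping-to-redblue} truly degrades gracefully from polynomial delay to incremental-polynomial delay, and that the $n^2$-fold iteration over flipping pairs followed by deduplication composes cleanly with Lemma~\ref{lemma:il-flip-golovach-improved} without inflating the delay beyond a polynomial in $n$ and the number $i$ of solutions already produced. Once this accounting is carried out, the stated equivalence between incremental-polynomial, polynomial-space red-blue enumeration on the restricted instances and incremental-polynomial, polynomial-space enumeration of $\D(G)$ on $\mathcal{G}$ falls out directly.
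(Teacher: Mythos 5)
Your proposal is correct and follows essentially the same route as the paper: observe that the antichain $B$ arising in Lemma~\ref{lemma:flipping-to-redblue} consists of minimal elements of the induced subposet so the hypothesis applies, run $\algoa'$ over all $\Oh(n^2)$ flipping pairs, remove the up-to-$n^2$-fold repetitions with the rerun trick of Section~\ref{sec:polyflipping-method}, and feed the result into Lemma~\ref{lemma:il-flip-golovach-improved}, whose non-decreasing delay bound $p(t)$ accommodates the incremental-polynomial regime. Your explicit remarks on the antichain minimality and on the polynomial-delay versus incremental-delay mismatch in Lemma~\ref{lemma:flipping-to-redblue} are points the paper only gestures at, but they do not change the argument.
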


\section{Red-blue domination in comparability graphs}\label{sec:red-blue}

We mentioned in Section~\ref{sec:preliminaries} that \textsc{Red-Blue-Dom-Enum} is already as hard as \textsc{Trans-Enum} even restricted to bipartite graphs, hence to comparability graphs.
We show nevertheless that the problem can be solved in in\-cremental-polynomial time under various restrictions on the red and blue sets (satisfying those of Lemma \ref{lemma:flipping-to-redblue}), as well as on the underlying poset.
More precisely, we show that, for any fixed integer $t$, \textsc{Red-Blue-Dom-Enum} is tractable in the comparability graph of $S_t$-free posets, whenever the blue elements are minimal in the poset. 
Since posets of bounded dimension do not contain any $S_p$ for some large enough $p$, we can derive the same for bounded dimension posets.

The key observation is that instances of red-blue domination in that case are of bounded conformality. As a corollary, we can use the algorithm of Khachiyan et al.~in~\cite{khachiyan2007dualization} to solve them in incremental-polynomial time. 
This yields by Theorem~\ref{theorem:flipping-to-redblue} an incremental-polynomial time algorithm enumerating minimal dominating sets in the comparability graphs of these posets.

Let us recall the notion of conformality introduced by Berge in~\cite{berge1984hypergraphs}. Informally, a hypergraph has small conformality when the property of not being contained in a hyperedge is witnessed by small subsets, in the sense that if a set is not contained in any hyperedge, then some small subset of it is not either.
More formally, let $c$ be an integer and $\H$ be a hypergraph.
We say that $\H$ is of \emph{conformality} $c$ if the following property holds for every subset $X\subseteq V(\H)$: $X$ is contained in a hyperedge of $\H$ whenever each subset of $X$ of cardinality at most $c$ is contained in a hyperedge of $\H$. 
Remember from Section~\ref{sec:preliminaries} that a hypergraph $\H$ is Sperner if $E_1\not\subseteq E_2$ for any two distinct hyperedges $E_1,E_2$ in $\H$.

Khachiyan, Boros, Elbassioni, and Gurvich proved the following; see also \cite{mary2024enumeration} for a simple and general proof of this theorem.

\begin{theorem}[\cite{khachiyan2007dualization,mary2024enumeration}]\label{theorem:conformality}
    The minimal transversals can be enumerated in incre\-mental-poly\-nomial time but using exponential space in Sperner hypergraphs of bounded conformality.
\end{theorem}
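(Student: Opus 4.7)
The plan is to follow the standard incremental generation framework for minimal transversals. Maintain a list $\T \subseteq \Tr(\H)$ of minimal transversals produced so far and reduce to the \emph{dualization decision problem}: given $\H$ and $\T$, either produce a minimal transversal in $\Tr(\H) \setminus \T$ or certify that $\T = \Tr(\H)$. An efficient subroutine for this decision yields incremental-polynomial time enumeration (a ``no'' certificate of the decision problem supplies the next transversal to output), while the storage of $\T$ to prevent repetitions accounts for the exponential-space requirement.

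To solve the dualization decision under bounded conformality, I would employ a recursive branch-and-reduce strategy. Pick a pivot vertex $v$ of $\H$ and split the search into transversals containing $v$ (reducing to dualizing the subhypergraph of hyperedges avoiding $v$) and transversals avoiding $v$ (recursing on the restriction to $V(\H)\setminus\{v\}$). Bounded conformality is preserved, or at least controllably inherited, under these operations. Conformality $c$ provides the structural handle that the hypergraph is essentially determined by its $\leq c$-sized faces, so a pivot that appears in many short near-transversal configurations yields subinstances that are strictly smaller in a well-defined complexity measure, and the recursion tree remains of polynomial size per output.

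The main obstacle is the precise pivot-selection rule and the amortized analysis certifying that each recursive step makes enough progress. The Fredman--Khachiyan balancing scheme used for arbitrary Sperner hypergraphs can only guarantee quasi-polynomial running time; the upgrade to incremental polynomial under bounded conformality must exploit the fact that any ``missing'' transversal is witnessed locally by a subset of size at most $c$, which one can enumerate by brute force in polynomial time whenever $c$ is a constant. Formalizing this local witness, together with checking that the conformality bound is stable under deletion/contraction along the recursion, would be the delicate technical heart of the proof.
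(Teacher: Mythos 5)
This statement is not proved in the paper at all: it is quoted verbatim from Khachiyan, Boros, Elbassioni and Gurvich \cite{khachiyan2007dualization} and used as a black box (to solve the red-blue domination instances arising from Lemma~\ref{lemma:St}). So there is no ``paper's proof'' to compare against; your proposal has to stand on its own as a proof of the cited result, and it does not.

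The gap is that everything specific to bounded conformality --- i.e., the entire content of the theorem --- is deferred. The framework you set up (maintain the list $\T$, reduce to the decision problem ``is $\T=\Tr(\H)$, and if not produce a witness'', charge the exponential space to storing $\T$) is standard and correct, but it applies verbatim to arbitrary Sperner hypergraphs, where the decision problem is only known to be solvable in quasi-polynomial time. The theorem is precisely the claim that for conformality at most $c$ the decision/search subproblem is solvable in time polynomial in $|V(\H)|+|\E(\H)|+|\T|$ for fixed $c$, and you explicitly label the argument for this (``the precise pivot-selection rule and the amortized analysis'') as the part you have not done. Moreover, the one concrete structural assertion you do make --- that bounded conformality is ``preserved, or at least controllably inherited'' under the deletion/contraction steps of a Fredman--Khachiyan-style recursion --- is unjustified and is false in general: conformality is defined by a covering condition over the whole edge family, and passing to the subfamily of edges avoiding a pivot $v$ can destroy it (a set $X$ all of whose small subsets lie in surviving edges may itself only lie in a deleted edge). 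The actual argument in \cite{khachiyan2007dualization} does not run a balanced recursion at all; it shows that when $\T\subsetneq\Tr(\H)$ a new minimal transversal can be located from a witness of size bounded in terms of $c$, so that brute force over $n^{O(c)}$ candidates solves the search problem directly. Your closing sentence gestures at exactly this local-witness idea, but without formulating or proving the witness lemma the proposal remains a plan rather than a proof.
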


Our result is a corollary of the following, which basically says that in our setting, hypergraphs with large conformality induce large $S_t$ in the underlying poset.

\begin{lemma}\label{lemma:St}
    Let $P=(V,\leq)$ be a poset, $B=\Min(P)$ and $R=P-B$.
    Let $\H$ be the Sperner hypergraph defined by $V(\H)=P-B$ and 
    \[
        {\E(\H)=\Min_\subseteq\{\upa{x}\setminus \{x\} \mid x\in B\}}.
    \]
    If~$\H$ is not of conformality $t-1$ for some integer $t$, then $P$ contains $S_t$ as a suborder.
\end{lemma}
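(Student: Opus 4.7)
The plan is to extract an $S_t$ directly from a minimal witness to the failure of conformality. Suppose $\H$ is not of conformality $t-1$, and pick $X\subseteq V(\H)$ of minimum cardinality such that every subset of $X$ of size at most $t-1$ is contained in some hyperedge of $\H$, while $X$ itself is not. Such an $X$ exists by assumption. Since $X$ is a subset of itself, we must have $|X|\ge t$; and by minimality of $|X|$, every proper subset of $X$ is contained in some hyperedge (any smaller set violating the conclusion would inherit the $(t-1)$-subset property from $X$, contradicting minimality).

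Write $X=\{b_1,\dots,b_k\}$ with $k\ge t$. For each $i$, the proper subset $X\setminus\{b_i\}$ is contained in some hyperedge of $\H$, which is necessarily of the form $\upa{a_i}\setminus\{a_i\}$ for some $a_i\in B$; thus $a_i<b_j$ for every $j\ne i$. On the other hand $a_i\not<b_i$: otherwise $X$ itself would lie in $\upa{a_i}\setminus\{a_i\}$, contrary to the choice of $X$. As $a_i$ is minimal in $P$ and $a_i\ne b_i$ (because $X\cap B=\emptyset$), the elements $a_i$ and $b_i$ are incomparable in $P$.

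The main observation, and the only slightly subtle step, is that $b_1,\dots,b_k$ must be pairwise incomparable in $P$: if $b_i<b_j$, then transitivity gives $a_j<b_i<b_j$, contradicting $a_j\not<b_j$; the case $b_j<b_i$ is symmetric. A similar one-line argument shows that the $a_i$ are pairwise distinct (if $a_i=a_j$ with $i\ne j$, then $a_j=a_i<b_j$ contradicts $a_j\not<b_j$); they are of course also pairwise incomparable as elements of the antichain $B=\Min(P)$. Since $B$ and $X$ are disjoint, the $2k\ge 2t$ elements $a_1,\dots,a_k,b_1,\dots,b_k$ are all distinct, so choosing any $t$ indices $i_1,\dots,i_t$ yields an induced subposet $P[\{a_{i_1},\dots,a_{i_t},b_{i_1},\dots,b_{i_t}\}]$ in which $a_{i_\ell}<b_{i_m}$ iff $\ell\ne m$, and which is therefore isomorphic to $S_t$. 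This shows that $P$ contains $S_t$ as a suborder, as required.
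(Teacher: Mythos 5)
Your proof is correct and follows essentially the same route as the paper's: take a minimum-cardinality witness $X$ to the failure of conformality, extract for each element a hyperedge covering all of $X$ but that element, show $X$ is an antichain, and pair $X$ with the corresponding minimal elements to exhibit $S_t$. The only differences are cosmetic (you verify distinctness of the $a_i$ explicitly and phrase the antichain argument via $a_j<b_i<b_j$ rather than via hyperedges being up-sets), so nothing further is needed.
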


\begin{proof}
    Assume that $\H$ is not of conformality ${t-1}$, i.e., there is a subset $X\subseteq V(\H)$ that is not contained in a hyperedge of $\H$, and such that every subset $Y\subseteq X$ of size at most $t-1$ is contained in a hyperedge of $\H$.
    We consider $X=\{x_1,\dots, x_p\}$ of minimum cardinality.
    Then ${p\geq t}$ and to every $x_i\in X$ corresponds a hyperedge $E_i$ of $\H$ such that $E_i\cap X=\{X\setminus \{x_i\}\}$.
    Indeed, if no such $E_i$ exists for some $x_i\in X$, then $X'=X\setminus \{x_i\}$ is not contained in a hyperedge of $\H$, and still every subset $Y\subseteq X'$ of size at most ${t-1}$ is, contradicting the minimality of $X$.
    
    Let us show that $X$ is an antichain of $P$.
    Suppose toward a contradiction that $X$ is not an antichain and contains two elements $x_i,x_j$ such that ${x_i< x_j}$.
    As $R\subseteq \upa{B}\setminus B$, every hyperedge of $\H$ that contains $x_i$ contains $x_j$.
    We conclude that $X\subseteq E_j$, a contradiction. 
    Hence $X$ is an antichain.
    
    Consider now the antichain $\{e_1,\dots,e_p\}\subseteq B$ corresponding to $E_1,\dots,E_p$ in the poset $P$, i.e., such that ${E_i=\upa{e_i}\setminus \{e_i\}}$ for  $i\in\intv{1}{p}$.
    Then the set ${\{e_1,x_1,\dots,e_p,x_p\}}$ induces $S_p$ as suborder, $p\geq t$.
\end{proof}

Lemmas~\ref{lemma:il-flip-golovach-improved}, \ref{lemma:flipping-to-redblue}, and~\ref{lemma:St} together yield the following corollary.

\begin{corollary}\label{corollary:comp}
    There is an algorithm enumerating, for every fixed integer~$t$, the minimal dominating sets in comparability graphs of $S_t$-free posets.
\end{corollary}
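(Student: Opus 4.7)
The plan is to compose the three cited lemmas with Theorem~\ref{theorem:conformality} into a single enumeration pipeline. Fix an integer $t$, a poset $P=(V,\leq)$ that is $S_t$-free, and its comparability graph $G$. First I would invoke Lemma~\ref{lemma:flipping-to-redblue} (plugged into Lemma~\ref{lemma:il-flip-golovach-improved} via the remark following Lemma~\ref{lemma:il-flip-golovach} that ``incremental delay'' may replace ``polynomial delay'' in the hypothesis) to reduce the task to the following subproblem: given an antichain $B$ that is minimal in $P[R\cup B]$ and a red set $R\subseteq \upa{B}\setminus B$, enumerate the minimal red dominating sets $\D(R,B)$ of $G(R,B)$ in incremental-polynomial time.

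To solve that subproblem, I would view $\D(R,B)$ as a minimal transversal enumeration problem. Since every $b\in B$ is minimal in $P[R\cup B]$ and $b\notin R$, the red neighborhood $N(b)\cap R$ in $G(R,B)$ coincides with $\upa{b}\cap R$. Thus the Sperner hypergraph $\H$ with $V(\H)=R$ and $E(\H)=\Min_\subseteq\{\upa{b}\cap R \mid b\in B\}$ satisfies $\Tr(\H)=\D(R,B)$. This is precisely the hypergraph addressed by Lemma~\ref{lemma:St}, applied to the subposet $P[R\cup B]$, which inherits $S_t$-freeness from $P$ (as $S_t$-freeness is monotone under taking induced suborders). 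The contrapositive of Lemma~\ref{lemma:St} then forces $\H$ to have conformality at most $t-1$, a constant since $t$ is fixed. Theorem~\ref{theorem:conformality} therefore supplies an incremental-polynomial time algorithm enumerating $\Tr(\H)=\D(R,B)$, which is exactly the black-box $\algob$ required above.

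The main obstacle is not any single step but the bookkeeping around complexity measures. Theorem~\ref{theorem:conformality} produces an enumerator that runs in incremental-polynomial time but in exponential space, whereas Lemma~\ref{lemma:flipping-to-redblue} is phrased around polynomial delay and polynomial space. One therefore has to trace through the proof of Lemma~\ref{lemma:flipping-to-redblue} and verify that Claims~\ref{claim:red-blue}--\ref{claim:child} use $\algob$ only as a black box, so that the construction of $\algoa'$ goes through unchanged under incremental delay, and similarly for Lemma~\ref{lemma:il-flip-golovach-improved}, accepting that the outer algorithm inherits the exponential space of the Khachiyan et al.\ black box. Once this bookkeeping is recorded, the desired algorithm follows by chaining Theorem~\ref{theorem:conformality} into Lemma~\ref{lemma:flipping-to-redblue} and then into Lemma~\ref{lemma:il-flip-golovach-improved}, yielding an incremental-polynomial time (exponential space) algorithm for $\D(G)$, as promised by the corollary.
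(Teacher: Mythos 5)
Your proposal is correct and follows exactly the paper's route: the paper proves this corollary by combining Lemma~\ref{lemma:il-flip-golovach-improved}, Lemma~\ref{lemma:flipping-to-redblue}, and Lemma~\ref{lemma:St} with the Khachiyan et al.\ algorithm of Theorem~\ref{theorem:conformality}, precisely the pipeline you describe. Your additional bookkeeping remarks (that $\algob$ is only used as a black box, that incremental delay suffices in the hypotheses, and that the exponential space of the dualization algorithm is inherited) match the paper's own caveats following the corollary.
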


Theorem~\ref{thm:maincomp} follows from Corollary~\ref{corollary:comp} and the observation that a poset containing $S_t$ has dimension at least $t$. 
Unfortunately, as the algorithms in~\cite{khachiyan2007dualization,mary2024enumeration} require exponential space, Corollary~\ref{corollary:comp} does not yield a poly\-nomial-space algorithm.

Finally, we note that while it is not clear whether the comparability graphs of bounded dimension posets are of bounded LMIM-width (and hence covered by the algorithm in~\cite{golovach2018lmimwidth} using similar methods), comparability graphs of $S_t$-free posets are not.

\section{Minimal dominating sets in incomparability graphs}\label{sec:flashlight}

We give a polynomial-delay algorithm enumerating minimal dominating sets in the incomparability graphs of bounded dimension posets, given with linear extensions witnessing the dimension. 
Although it is not necessary, we present a geometrical representation of the incomparability graphs that we find useful while working with them.

Let $P=(V,\leq)$ be a poset on $n$ elements and of dimension at most $d$. 
Let $\leq_1,\ldots,\leq_d$ be a sequence of linear orders witnessing it. 
Thus, we have $x \leq y$ in $P$ if and only if $x\leq_i y$ for each $i\in\{1,\ldots,d\}$.
Consider $d$ distinct vertical lines $\LL_1,\ldots, \LL_d$ in the plane, sorted from left to right in that order.
For each $i\in\{1,\ldots,d\}$, we distinguish $n$ points on $\LL_i$ and label them bottom-up 
with elements of $P$ sorted by $\leq_i$.
Now for each element $v$ in $P$, 
we define a piecewise linear curve $\overline{v}$ consisting of $d-1$ segments and connecting points labelled $v$ on consecutive lines.
It is a folklore observation, see e.g.\ \cite{golumbic1983comparability}, that the incomparability graph of $P$ is the intersection graph of this family of curves.
An example for $d=4$ is given in Figure~\ref{fig:lines}.

In the remaining of this section, we assume that a poset $P=(V,\leq)$ of dimension $d$ is given and we are also given the total orders $\leq_1,\ldots,\leq_d$ witnessing the dimension of $P$.
As described above we fix the lines $\LL_1,\ldots,\LL_d$, and the piecewise linear curves representing each element of $P$.
Let $G$ be the incomparability graph of $P$.

For a non-empty subset $S$ of elements of $P$ and $i\in\intv{1}{d}$, we define
$\LL_i(S)$ to be the maximum element of $S$ in $\leq_i$.
We call \emph{vertices upwards from $S$} the elements of the set 
\[
    \UU(S)=\{v\in V\setminus S \mid \LL_i(S) <_i v\ \text{for some}\ {i\in\intv{1}{d}}\}.
\]
A set $D$ is an \emph{upward extension} of $S$ if $S\subseteq D$ and $D\setminus S \subseteq \UU(S)$.

Let $S$ be a subset of elements of $P$ of size at least $3d$. 
We call the \emph{first layer} of $S$ the tuple $\AA(S)=(a_1,\dots,a_d)$ so that 
$a_1=\LL_1(S)$, and for every $i\in\intv{2}{d}$,
\[
    a_i=\LL_i\left(S\setminus \{a_1,\dots,a_{i-1}\}\right).
\]
Note that by this definition it might happen that $a_i\neq\LL_i(S)$.
This is in particular the case if $\LL_i(S)=\LL_j(S)$ for some $i,j\in\intv{1}{d}$ with $j<i$.
The \emph{second layer} of $S$ is the set $\BB(S)=\AA(S\setminus \AA(S))$.
The \emph{third layer} of $S$ is the set $\CC(S)=\AA\big(S\setminus (\AA(S)\cup \BB(S))\big)$.
Note that since $|S|\geq 3d$, the three layers are well-defined.
We call the \emph{border} of $S$ the concatenation 
\[
    \TT(S)=(a_1,\dots,a_d,b_1,\dots,b_d,c_1,\dots,c_d)
\]
of these three layers. 
In the following, we will consider irredundant such $S$. 
The reason for considering three layers is to help characterizing, in the coming proof, where the private neighbors of the elements in $\BB(S)$ lie, namely, ``between'' the elements of $\AA(S)$ and $\CC(S)$.

\begin{figure}
    \centering
    \includegraphics[scale=1.1]{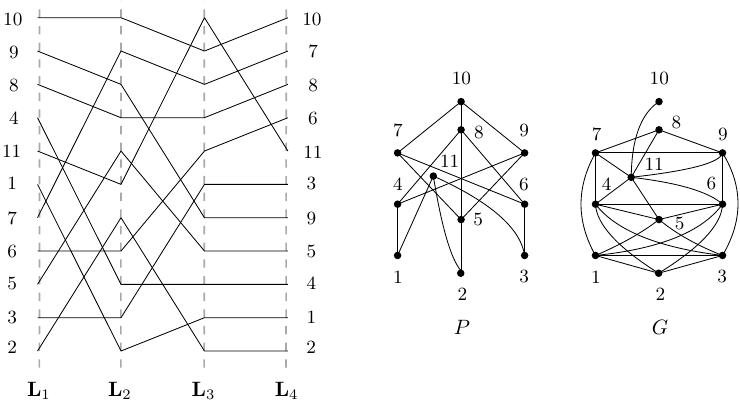}
    \caption{A poset $P$ and its incomparability graph $G$ as an intersection graph of curves induced by four linear extensions witnessing the dimension.}
    \label{fig:lines}
\end{figure}

We say that $I$ can be \emph{extended upwards into a minimal dominating set} whenever there is an upward extension of $I$ 
that is a minimal dominating set of $G$.
In the following, we aim to decide in polynomial time whether a given irredundant set $I$ in $G$  can be extended upwards into a minimal dominating set.
When the given set $I$ is of size at most $3d$, say $I=\{x_1,\dots,x_p\}$ and $p\leq 3d$, then this can be done efficiently
by checking for all the tuples $(y_1,\dots,y_p)\in \priv(I,x_1)\times \dots\times \priv(I,x_p)$ whether 
$\UU(I)\setminus N[y_1,\dots,y_p]$ dominates $G-N[I]$ or not.
A single tuple like that could be verified in $\Oh(n^2)$ time.
Since the number of tuples is no more than $n^{p}$, the total time is $\Oh(n^{3d+2})$.
If there is such a tuple $(y_1,\dots,y_p)$, then $\UU(I)\setminus N[y_1,\dots,y_p]$ can be greedily reduced into a minimal set $X$ so that $I\cup X$ is a minimal dominating set of $G$.
Otherwise, we know that $I$ cannot be extended as we explored all the possibilities for $I$ to keep its private neighbors in an upward extension.
This may be seen as an adaptation of the algorithm by Boros et al.~for the extension of subsets of constant size \cite{boros1998dual} to our ordered context.

The point of the following is to
show that the same technique can be applied for irredundant sets of arbitrary size.
The key insight is that it is enough to check whether we can extend $I$ into a dominating set so that all elements in the border $\TT(I)$ keep a private neighbor, with $\TT(I)$ being of constant size for fixed values of $d$.

\needspace{2cm}

\begin{theorem}\label{thm:bt-extension}
    Let $I$ be an irredundant set of $G$ of size at least $3d$ and 
    let $\TT(I)=(t_1,\ldots,t_{3d})$.
    Then $I$ can be extended upwards into a minimal dominating set of $G$ if and only if 
    there exists a tuple $(w_1,\dots,w_{3d})\in \priv(I,t_1)\times \dots\times \priv(I,t_{3d})$ such that
    $\UU(I)\setminus N[\{w_1,\dots,w_{3d}\}]$ is a dominating set of $G-N[I]$.
\end{theorem}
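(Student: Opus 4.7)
Given a minimal dominating set $D=I\cup X$ with $X\subseteq\UU(I)$, I would pick for each border element $t_i$ any private neighbor $w_i\in\priv(D,t_i)$, which exists by minimality. Since $I\subseteq D$ one has $\priv(D,t_i)\subseteq\priv(I,t_i)$, so $w_i\in\priv(I,t_i)$. Privacy with respect to $D$ forces $N[w_i]\cap D=\{t_i\}$, hence $N[w_i]\cap X=\emptyset$ (as $t_i\in I$ is disjoint from $X\subseteq\UU(I)$), so $X\subseteq\UU(I)\setminus N[w_1,\dots,w_{3d}]$. Because $X$ must dominate $V(G)\setminus N[I]$ inside $D$, so does the larger set $\UU(I)\setminus N[w_1,\dots,w_{3d}]$.

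\paragraph{Backward direction: construction.}
Given such a tuple, set $U=\UU(I)\setminus N[w_1,\dots,w_{3d}]$ and extract an inclusion-wise minimal $X\subseteq U$ with $I\cup X$ dominating $G$, obtained by greedy removal from $U$. The plan is to show that $D=I\cup X$ is a minimal dominating set. By minimality of $X$ inside $U$, each $u\in X$ has a private neighbor lying in $V(G)\setminus N[I]$ with respect to $D$. Each border element $t_i$ keeps $w_i$ as a private neighbor: $w_i\in\priv(I,t_i)$ and $N[w_i]\cap X=\emptyset$, since $X\subseteq V(G)\setminus N[w_i]$ by construction of $U$.

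\paragraph{Main obstacle and complexity.}
The crux is to show that every non-border $x\in I\setminus\TT(I)$ also retains a private neighbor in $D$. My plan is a structural lemma that leverages the three-layer design of $\TT(I)$. By the greedy definition of the layers, the three $\leq_i$-largest elements of $I$ all belong to $\TT(I)\subseteq I\setminus\{x\}$ in every coordinate $i\in\intv{1}{d}$, so any $y\in\priv(I,x)$ must be comparable in $P$ to all $3d$ border elements. I would argue that if some $u\in\UU(I)$ were adjacent to $y$ in $G$ (i.e., incomparable to $y$ in $P$), then the joint constraints coming from $y$'s comparabilities with the first-, second-, and third-layer witnesses in the dimensions where $u$ exceeds $\LL_i(I)$ would force $u$ to be adjacent to at least one $w_i$, contradicting $u\in U$. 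The three layers should be exactly what is needed: one layer handles top comparability in each dimension, while the other two rule out escape routes when distinct dimensions interact. This case analysis is the main technical step I expect to be delicate. Once the lemma is established, the complexity claim follows directly: one enumerates the at most $n^{3d}$ tuples $(w_1,\dots,w_{3d})\in\priv(I,t_1)\times\cdots\times\priv(I,t_{3d})$, and verifies the dominance condition for each in $\Oh(n^2)$ time, giving the $\Oh(n^{3d+2})$ bound anticipated in the section's opening remarks.
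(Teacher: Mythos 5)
Your forward direction and the skeleton of the backward direction (take $X\subseteq\UU(I)\setminus N[w_1,\dots,w_{3d}]$, observe that border elements keep the $w_i$ as private neighbors and that elements of $X$ are private-neighbored by minimality) match the paper's proof. But the entire content of the theorem lives in the step you defer: showing that a non-border $u\in I\setminus\TT(I)$ keeps some $v\in\priv(I,u)$ as a private neighbor, i.e., that no $x\in\UU(I)\setminus N[w_1,\dots,w_{3d}]$ is adjacent to $v$. You state the target ("the joint constraints \dots would force $u$ to be adjacent to at least one $w_i$") and correctly note that $v$ is comparable in $P$ to every border element, but you give no argument for why the contradiction actually materializes. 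This is a genuine gap, not a routine verification: the statement is false if one uses only one or two layers, and the proof has to thread a specific chain of inequalities through all three. Moreover, your stated structural lemma --- that ``the three $\leq_i$-largest elements of $I$ all belong to $\TT(I)$ in every coordinate $i$'' --- is not what the greedy layer construction gives you (e.g.\ $a_i$ need not equal $\LL_i(I)$, and the third $\leq_i$-largest element of $I$ may be picked by no layer); the properties actually available are $u<_p c_p$ for every $p$ and every $u\in I\setminus\TT(I)$, $b<_t a_t$ for every second-layer $b$, and $a_t\leq_p\LL_p(I)$.

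Concretely, the paper's argument runs as follows, and none of it is recoverable from your sketch. Suppose $x\in X$ meets $\overline{v}$, so $x<_q v$ for some $q$. (i) Using the \emph{third} layer: since $v<_p u<_p c_p$ in the coordinate $p$ where $v<_p u$, and $v$ is comparable to $c_p$, one gets $v<c_p$ in $P$; this rules out $b_k<v$ and yields $v<b_k$ in $P$ for all $k$, hence $x<_q v<_q b_k$ for all $k$. (ii) Using the \emph{second} layer and a pigeonhole count over the $d$ coordinates: if every private neighbor $s_i$ of every $b_j$ satisfied $s_i<_q b_j$, the comparabilities $s_i<b_j$ ($i\neq j$) would force $d$ pairwise distinct witness coordinates $t(i)$ with $b_i<_{t(i)}s_i$, all avoiding $q$ --- impossible; so some $b_\alpha<_q s_\beta$, giving $x<_q s_\beta$. (iii) Using the \emph{first} layer: from $s_\beta<_t b_\beta<_t a_t$ and comparability of $s_\beta$ with $a_t$ one gets $s_\beta<a_t$ in $P$, hence $s_\beta<_p a_t\leq_p\LL_p(I)<_p x$ in the coordinate $p$ witnessing $x\in\UU(I)$. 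Together $x<_q s_\beta$ and $s_\beta<_p x$ make $x$ adjacent to $s_\beta=w_{d+\beta}$, contradicting $x\notin N[w_1,\dots,w_{3d}]$. Without carrying out (i)--(iii), the theorem is unproved; I would treat your submission as a correct reduction of the theorem to this lemma, not as a proof.
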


\begin{proof}
    First, we prove the forward implication. 
    Suppose that $I$ can be extended upwards into a minimal dominating set of $G$ and
    let $X\subseteq \UU(I)$ be such that $D=I\cup X$ is a minimal dominating set of $G$.
    Then $X$ dominates $G-N[I]$ and $\priv(D,u)\neq \emptyset$ for every $u\in D$. 
    Thus there exists a tuple $(w_1,\dots,w_{3d})$ in $\priv(D,t_1)\times \dots\times \priv(D,t_{3d})$, which, by definition, is such that $X\cap N[\{w_1,\dots,w_{3d}\}]=\emptyset$.
    Note that $\priv(D,t_i) \subseteq \priv(I,t_i)$ for each $i\in\{1,\ldots,3d\}$, hence that $(w_1,\dots,w_{3d})$ belongs to $\priv(I,t_1)\times \dots\times \priv(I,t_{3d})$ as well.
    This completes the proof of the forward implication.

    We now turn to the backward implication. 
    Intuitively, we aim to show in the following that a vertex from $\UU(I)$ may not steal a private neighbor of a vertex in $I\setminus \TT(I)$ without stealing the private neighbors of a vertex in $\TT(I)$, namely, one in $\BB(I)$.
    To prove this, we will show that the piecewise linear curves of the elements in $\TT(I)$ ``separate'' the curves of the private neighbors of the elements in $I\setminus \TT(I)$ from the curves of elements in $\UU(I)$.
    
    Suppose that there exists $(w_1,\dots,w_{3d})\in \priv(I,t_1)\times \dots\times \priv(I,t_{3d})$ such that
    $X:=\UU(I)\setminus N[\{w_1,\dots,w_{3d}\}]$ dominates $G-N[I]$.
    Thus $D=I\cup X$ dominates $G$. 
    In order to conclude that $I$ extends upwards into a minimal dominating set of $G$, all we need to see is that each element $u$ in $I$ has a private neighbor with respect to $D$, i.e., $\priv(D,u)\neq\emptyset$.
    Indeed, if so, then $D$ can be reduced into a minimal dominating set with the desired property that $D\setminus I\subseteq \UU(S)$.
    Since $X$ avoids $N[\{w_1,\dots,w_{3d}\}]$, we have that $w_i \in\priv(D,t_i)$ for each $i\in\{1,\ldots,3d\}$.
    Consider any element $u$ in $I\setminus \TT(I)$.
    Since $I$ is irredundant, we can fix $v\in\priv(I,u)$.
    We shall show that $v\in\priv(D,u)$.
    
    For convenience, we split the sequence $(t_1,\ldots,t_{3d})$ into three sequences $(a_1,\ldots,a_d)$, $(b_1,\ldots,b_d)$, and $(c_1,\ldots, c_d)$, so it relates to the initial layers $\AA(I)$, $\BB(I)$, and $\CC(I)$, respectively.
    
    In order to get a contradiction, suppose that there is $x\in X$ such that $x$ and $v$ are adjacent in $G$, i.e., $\overline{x}$ and $\overline{v}$ intersect.
    In particular, there is some $q\in \{1,\ldots,d\}$ such that $x <_q v$. 

    We claim that
    \[
        v < b_k \text{ in $P$ for every $k\in\{1,\ldots,d\}$.}
    \]
    Since $v\in \priv(I,u)$ and $b_k \in I$, $\overline{v}$ and $\overline{b_k}$ do not intersect. 
    Therefore, either $v < b_k$ in $P$ or $v > b_k$ in $P$.
    Assume toward a contradiction that $b_k < v$ in $P$.
    Since $\overline{u}$ and $\overline{v}$ intersect, we can fix $p\in\{1,\ldots,d\}$ such that $v<_p u$. 
    Recall that $u \in I \setminus \TT(I)$. 
    Thus, by the definition of the third layer $\CC(I)$ we have $u <_p c_p$.
    Hence $v <_p u <_p c_p$. 
    Since $v\in\priv(I,u)$ and $c\in I$, we have that $\overline{c_p}$ and $\overline{v}$ are disjoint. 
    We deduce that $v < c_p$ in $P$.
    This contradicts our assumption as $b_k < v < c_p$ in $P$ but, by construction, no element of the second layer can be below an element of the third layer. This completes the proof of the claim.

    In particular, we have $x <_q v <_q b_k$ for every $k\in\{1,\ldots,d\}$.

    Consider the tuple $(s_1,\dots,s_d)=(w_{d+1},\ldots,w_{2d})$ of private neighbors of the elements of $(b_1,\ldots,b_{d})$.
    We claim that there exist indices $\alpha,\beta\in\{1,\ldots,d\}$ such that
    \[
        b_{\alpha} <_q s_{\beta}.
    \]
    Towards the contradiction, assume that $s_i <_q b_j$ for all $i,j\in\{1,\ldots,d\}$.
    Since $\overline{s_i}$ and $\overline{b_j}$ are non-intersecting for $i\neq j$, we conclude that $s_i < b_j$ in $P$ for every $i\neq j$.
    Since $\overline{s_i}$ intersects $\overline{b_i}$, there is an index $t(i)$ so that $b_i <_{t(i)} s_i$, for each $i\in\{1,\ldots,d\}$. 
    Clearly, the values $t(i)$ must be all distinct for $i\in\{1,\ldots,d\}$. 
    This way, we need to take $d$ distinct values for $t_i$'s and because of our assumption all of them are in $\{1,\ldots,d\}\setminus\{q\}$. We obtain a contradiction, proving the claim.

    We concluded so far that
    \[
        x <_q v <_q b_{\alpha} <_q s_{\beta}.
    \]

    Now recall that $x \in X \subseteq\UU(I)$. 
    Thus, there must be some $p\in\{1,\ldots,d\}$ with
    \[
        \LL_p(I) <_p x.
    \]
    Recall also that $\overline{s_{\beta}}$ intersects $\overline{b_{\beta}}$, 
    so we can fix $t\in\{1,\ldots,d\}$ such that $s_{\beta} <_t b_{\beta}$. 
    By the definition of the first two layers, we have that $b_{\beta} <_t a_t$. 
    Thus, $s_{\beta} <_t a_t$. 
    Since $s_{\beta}\in\priv(I,b_{\beta})$ and $a_t\in I$, we get that $\overline{s_{\beta}}$ and $\overline{a_t}$ are non-intersecting. 
    Therefore, $s_{\beta} < a_t$ in $P$. 
    In particular, we get $s_{\beta} <_p a_t$ and
    \[
        s_{\beta} <_p a_t \leq_p \LL_p(I) <_p x.
    \]
    
    The two inequalities $s_{\beta} >_q x$ and $s_{\beta} <_p x$ imply that $\overline{x}$ intersects $\overline{s_{\beta}}$. 
    Thus $x$ and $s_{\beta}$ are adjacent in $G$. 
    This contradicts the assumption that $X \subseteq \UU(I) \setminus N[s_{\beta}]$ and completes the proof of the backward implication.
\end{proof}

We deduce the next corollary, by guessing a good tuple $(w_1,\dots,w_{3d})$ as in Theorem~\ref{thm:bt-extension} in case when $|I|\geq 3d$, and checking for such a tuple whether the set $\UU(I)\setminus N[\{w_1,\dots,w_{3d}\}]$ is a dominating set of $G-N[I]$.
Space is polynomial as we only iterate through neighborhoods. 

\begin{corollary}\label{cor:bt-extension}
    There is an algorithm that, given an irredundant set $I$ of $G$, decides in $\Oh(n^{3d+2})$ time and polynomial space whether $I$ can be extended upwards into a minimal dominating set.
\end{corollary}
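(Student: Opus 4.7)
The plan is to split the task according to whether $|I| < 3d$ or $|I| \geq 3d$, and in both regimes use brute-force enumeration over tuples of private neighbors, one per ``distinguished'' element of $I$.

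In the small case $|I| = p < 3d$, I would directly follow the approach sketched in the paragraph preceding Theorem~\ref{thm:bt-extension}: writing $I = \{x_1,\dots,x_p\}$, iterate through all tuples $(y_1,\dots,y_p) \in \priv(I,x_1) \times \dots \times \priv(I,x_p)$ and, for each, test in $O(n^2)$ time whether $\UU(I) \setminus N[y_1,\dots,y_p]$ dominates $G - N[I]$. Since the number of tuples is at most $n^p \leq n^{3d-1}$, this fits within the budget $O(n^{3d+2})$. If any tuple succeeds, $I$ extends (one can even output the extension by greedily trimming the candidate set into an irredundant one); otherwise, no extension exists since we have exhausted every possible assignment of private neighbors to the elements of $I$.

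In the large case $|I| \geq 3d$, the border $\TT(I) = (t_1,\dots,t_{3d})$ is well-defined and can be computed in polynomial time from the linear extensions $\leq_1,\dots,\leq_d$ (each layer costs $d$ rankwise-maximum lookups). I would then invoke Theorem~\ref{thm:bt-extension} directly: enumerate all tuples $(w_1,\dots,w_{3d}) \in \priv(I,t_1) \times \dots \times \priv(I,t_{3d})$ and, for each, verify in $O(n^2)$ time whether $\UU(I) \setminus N[w_1,\dots,w_{3d}]$ dominates $G - N[I]$. There are at most $n^{3d}$ tuples, so the total running time is $O(n^{3d+2})$. The theorem guarantees that this search is both sound and complete: $I$ extends upwards into a minimal dominating set if and only if some tuple passes the domination check.

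For the space bound, I would note that in both cases the algorithm merely iterates through tuples using $O(d \log n)$ bits of indexing, stores the current tuple and the sets $\priv(I,t_i)$ (each readable from the input in polynomial time), and evaluates domination by scanning neighborhoods rather than by storing any exponential-size structure. No list of candidates or of already-seen solutions is kept. Hence the working space stays polynomial in $n$. The only conceptually non-trivial ingredient is Theorem~\ref{thm:bt-extension}, which has already been proved; everything else is bookkeeping.
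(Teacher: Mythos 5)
Your proposal is correct and follows essentially the same route as the paper: the case $|I|<3d$ is handled by the brute-force search over tuples of private neighbors described just before Theorem~\ref{thm:bt-extension}, and the case $|I|\geq 3d$ by guessing a tuple for the border $\TT(I)$ and applying Theorem~\ref{thm:bt-extension}, with the same $\Oh(n^{3d+2})$ time and polynomial-space accounting. Nothing further is needed.
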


We are now ready to describe a simple algorithm enumerating minimal dominating sets in the incomparability graphs of bounded dimension posets.
The algorithm will construct solutions one vertex at a time, guaranteeing that every partial solution extends to a minimal dominating set, in a fashion similar to the \emph{flashlight search} technique~\cite{strozecki2019efficient,capelli2021enumerating}.
The following $\parent$ relation between partial solutions will be used by the algorithm to handle repetitions.

\begin{definition}\label{def:parent}
    Let $I$ be a non-empty irredundant set of $G$ that can be extended upwards into a minimal dominating set.
    We call parent of $I$ the unique irredundant set $I^*=\parent(I)$ obtained by removing vertex $\LL_1(I)$ from $I$, i.e., the greatest vertex $v$ in $I$ with respect to $<_1$.
\end{definition}

Observe that every minimal dominating set $D$ of $G$ is an irredundant set of $G$ that can be extended upwards into a minimal dominating set (the extension being $D$ itself), with no children.
Conversely, an irredundant set that can be extended upwards into a minimal dominating set, with no children, is necessarily a minimal dominating set.

Consequently, the $\parent$ relation as introduced in Definition~\ref{def:parent} defines a tree $T$ whose nodes are irredundant sets of $G$ that can be extended upwards into minimal dominating sets, root is the empty set, leaves are minimal dominating sets of $G$, and where there is an edge between two irredundant sets $I^*$ and $I$ if $I^*=\parent(I)$. 
The fact that all minimal dominating sets appear as leaves of this tree follows by induction noting that each node of $T$ distinct from the root has a parent whose cardinal is strictly smaller.
As in Section~\ref{sec:polyflipping}, the enumeration proceeds with what boils down to a DFS of $T$ initiated at the empty set.
When visiting a node $I^*\in V(T)$, the algorithm seeks the children of $I$ as follows.
It checks for every candidate vertex $v\in \UU(I^*)$ whether $I^*\cup \{v\}$ can be extended upwards into a minimal dominating set, using Corollary~\ref{cor:bt-extension}, and whether the obtained set is a child of $I^*$, using Definition~\ref{def:parent}.
Whenever it is the case, the algorithm ``pauses'' the generation of children of $I^*$, and generates children of $I^*\cup\{v\}$. 
When the generation on $I^*\cup\{v\}$ is complete, the algorithm ``resumes'' the generation on $I^*$.
During this procedure, only the leaves of $T$, hence the minimal dominating sets of $G$, are output by the algorithm.
Duplications are implicitly avoided by the structure of $T$.

The delay time complexity is bounded by twice the depth of the tree (the maximal distance between two leaves in $T$), times the time complexity of solving the extension problem and checking the $\parent$ relation, for every candidate vertex $v$.
This sums up to
\[
    2n \cdot \Oh(n^{3d+2}+n^2) \cdot n = \Oh(n^{3d+4})
\]
Space complexity is polynomial as we only need to store for each node $W\in T$ from the root to the current node $I^*\in T$ the data of the (paused) execution of the children generation on node $W$. 

We conclude to Theorem~\ref{thm:mainincomp} that we restate here with the complexity.

\begin{theorem}\label{theorem:incomp}
    There is an algorithm that, given the incomparability graph $G$ of an $n$-element poset $P$ of dimension $d$, together with $d$ linear orders witnessing the dimension of $P$, enumerates all minimal dominating sets of $G$ with delay $\Oh(n^{3d+4})$ and using polynomial space. 
\end{theorem}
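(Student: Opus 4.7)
The plan is to perform a flashlight search guided by the $\parent$ relation of Definition~\ref{def:parent}. I would organize the set of irredundant sets of $G$ that can be extended upwards into a minimal dominating set as the nodes of a rooted tree $T$: the root is the empty set (which is trivially upward-extendable, as $G$ admits at least one minimal dominating set), the leaves are exactly the minimal dominating sets of $G$ (an upward-extendable irredundant set with no proper extension must already be a dominating set, and conversely a minimal dominating set has no upward extension), and edges correspond to the $\parent$ relation. The enumeration then amounts to a depth-first traversal of $T$, emitting a solution each time a leaf is reached.

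The core subroutine at a node $I^*$ enumerates its children. Concretely, I would iterate over every $v \in \UU(I^*)$ and keep $v$ if and only if (i)~$I^* \cup \{v\}$ is irredundant, (ii)~by Corollary~\ref{cor:bt-extension}, $I^* \cup \{v\}$ can be extended upwards into a minimal dominating set, and (iii)~$\parent(I^* \cup \{v\}) = I^*$, which reduces to checking that $v = \LL_1(I^* \cup \{v\})$ by Definition~\ref{def:parent}. As soon as such a $v$ is found, I would recurse on $I^* \cup \{v\}$ before continuing the scan on $I^*$; when the recursive call returns, the scan is resumed from where it was paused. Because the $\parent$ relation deterministically associates to each leaf a unique root-to-leaf path, duplications are avoided structurally and no memory of previously output sets is required.

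For the delay, the depth of $T$ is at most $n$, so between two consecutive outputs the algorithm climbs at most $n$ levels and descends at most $n$ levels, processing at each step at most $n$ candidate vertices, each check running in time $\Oh(n^{3d+2})$ by Corollary~\ref{cor:bt-extension} (the irredundancy and $\parent$ tests being dominated by this). This yields a delay bound of $2n \cdot n \cdot \Oh(n^{3d+2}) = \Oh(n^{3d+4})$. For the space, at any time the algorithm only stores the stack of paused scans along the current root-to-node path of $T$, each scan requiring polynomial space (it is a loop over $\UU$ together with the polynomial-space extension test of Corollary~\ref{cor:bt-extension}); since the path has length at most $n$, the total space is polynomial.

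The main obstacle is not in the enumeration scheme itself, which is a routine flashlight search, but in ensuring correctness of the children generation, namely that the test performed at each candidate $v$ is both necessary and sufficient to decide membership in $T$. Necessity is immediate from the definitions; sufficiency is exactly the content of Theorem~\ref{thm:bt-extension} together with Corollary~\ref{cor:bt-extension}, which is where all the geometric work has been done. Once those are in hand, the only remaining care is to verify that $\UU(I^*)$ and the layers $\AA, \BB, \CC$ can be maintained in polynomial time along the DFS, which follows directly from their linear-order definitions.
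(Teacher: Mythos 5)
Your proposal is correct and follows essentially the same route as the paper: the same $\parent$ relation (Definition~\ref{def:parent}), the same tree of upward-extendable irredundant sets traversed by a pausing/resuming DFS, the same per-candidate test via Corollary~\ref{cor:bt-extension}, and the same $2n \cdot n \cdot \Oh(n^{3d+2})$ delay and polynomial-space accounting. The only (welcome) addition is that you make the irredundancy check on $I^*\cup\{v\}$ explicit, which the paper leaves implicit.
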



\section{Discussions}\label{sec:conclusion}

In this paper, we provided an incremental-polynomial (resp.~polynomial-delay) algorithm enumerating the minimal dominating sets in the comparability (resp.~incomparability) graphs of bounded dimension posets.
Recall that the complexity of \textsc{Trans-Enum}, which reduces to \textsc{Dom-Enum} in co-bipartite graphs, is widely open. 
As incomparability graphs include co-bipartite graphs, dropping the dimension in Theorem~\ref{thm:mainincomp} is one of the most important algorithmic challenges in enumeration.
On the other hand, dropping the dimension in Theorem~\ref{thm:maincomp} seems a more tractable, though fascinating challenge.

The algorithm provided by Corollary~\ref{corollary:comp} covers in fact all comparability graphs of $S_t$-free posets, not just those of bounded dimension. The two first authors initiated in \cite{bonamy2020kt} a characterization of the complexity of \textsc{Dom-Enum} in $H$-free graphs. Despite the fact that they encapsulate all the hardness of \textsc{Dom-Enum}, incomparability graphs have a restrictive structure. 
What can we say about $H$-free incomparability graphs? 
Due to co-bipartite graphs, this question is only interesting for a co-bipartite $H$. 
To match Theorem~\ref{thm:maincomp}, it is tempting to wonder what happens for incomparability graphs of $S_t$-free posets.
Our proof clearly does not extend naturally, as it crucially relies on the bounded dimension representation---recall that not all $S_t$-free posets have small dimension \cite{Tro-book}.
However, we believe that an extension would be possible with different tools.

\begin{conjecture}\label{conj:Stfreeincomps}
    For every $t$, there is an output-polynomial time algorithm for \textsc{Dom-Enum} in incomparability graphs of $S_t$-free posets. 
\end{conjecture}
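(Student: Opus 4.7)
The plan is to try to adapt the strategy from Corollary~\ref{corollary:comp}: combine the flipping method (Lemma~\ref{lemma:il-flip-golovach-improved}) with a reduction to red-blue domination, and cash in the $S_t$-free hypothesis through a bounded-conformality argument. The flashlight approach of Section~\ref{sec:flashlight} relied crucially on the geometric layering coming from $d$ linear extensions, which is no longer available under an $S_t$-free assumption alone; the flipping route seems more promising because it is graph-class agnostic and only asks for an efficient enumerator of children.

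First, I would establish an analogue of Lemma~\ref{lemma:flipping-to-redblue} for incomparability graphs. Given $D^*\in\D(G)$ and an edge $uv$ with $v$ isolated in $G[D^*]$ (so $u$ and $v$ are incomparable in $P$), one needs to identify a ``removable'' set $R_1$ of self-private vertices of $D^*$ incident to $u$, a set $B$ of vertices that become undominated when $R_1$ is exchanged for $u$, and a ``replacement pool'' $R$ of vertices that can re-dominate $B$. In the comparability setting, the direction $u\leq v$ forced $R_1\subseteq\upa{u}\cap D^*$ to be an antichain; in the incomparability setting, the analogous structural description has to be read off the symmetric incomparability edge $uv$, and $R$, $B$ are characterized by patterns of incomparability in $P$. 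The children of $D^*$ under this flip then correspond to minimal red dominating sets of $G(R,B)$, which is again equivalent to \textsc{Trans-Enum} on the hypergraph $\H$ with $V(\H)=R$ and $\E(\H)=\{N(y)\cap R \mid y\in B\}$.

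Next, I would aim at an analogue of Lemma~\ref{lemma:St}: if $\H$ has conformality more than $f(t)$ for some function $f$, then $P$ contains $S_t$ as a suborder. Given a minimum-size witness $X=\{x_1,\dots,x_p\}\subseteq V(\H)$ of large conformality, one obtains hyperedges $E_1,\dots,E_p$ with $E_i\cap X=X\setminus\{x_i\}$, and the goal is to extract from $X$ and the blue elements $y_1,\dots,y_p$ realizing these $E_i$ two antichains of size $\ge t$ witnessing a copy of $S_t$ in $P$. If this works, Theorem~\ref{theorem:conformality} enumerates $\Tr(\H)$ in incremental-polynomial time, which by the polynomial equivalence of \textsc{Trans-Enum} and \textsc{Red-Blue-Dom-Enum} gives $\D(R,B)$, and then Lemma~\ref{lemma:il-flip-golovach-improved} yields the desired output-polynomial algorithm for $\D(G)$.

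The main obstacle is precisely this conformality step. In the comparability case, blue vertices were minimal elements of $P$ and red vertices lived in their up-sets, which gave the combinatorics a built-in orientation: a failed cover of $X$ by up-sets of minimal elements immediately exposes the bipartite comparability pattern of a standard example. In the incomparability setting, $R$ and $B$ are tied by incomparabilities rather than comparabilities, and the $S_t$-free assumption does not evidently control the resulting ``Helly-type'' behavior. Overcoming this likely requires either a sharper reduction that recovers an oriented sub-structure inside the incomparability graph, a Ramsey-style extraction of a standard example from a sufficiently large uncovered family, or a different algorithmic paradigm---for instance a flashlight search whose branching order is derived from an iteratively computed chain decomposition rather than from linear extensions. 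The authors' explicit remark that the bounded-dimension proof ``crucially relies on the bounded dimension representation'' suggests that none of these directions is routine, which is why the statement is left as a conjecture rather than a theorem.
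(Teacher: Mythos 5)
This statement is Conjecture~\ref{conj:Stfreeincomps}; the paper does not prove it, and explicitly flags it as open precisely because the bounded-dimension argument does not carry over. Your proposal is a research plan rather than a proof, and you say as much yourself: the decisive step is missing.

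Concretely, two links in your chain are not established. First, the analogue of Lemma~\ref{lemma:flipping-to-redblue} for incomparability graphs is asserted but never constructed. The comparability-graph version leans on transitivity throughout: the orientation $u\leq v$ forces $R_1\subseteq \upa{u}\cap D^*$ to be an antichain, gives $B\subseteq \doa{R_1}\setminus R_1$ and $R\subseteq\upa{B}\setminus B$, and Claim~\ref{claim:red-blue} uses the implication ``$x\leq y$ and $y\leq z$ imply $x\leq z$'' to replace $B$ by $\Max(B)$. Adjacency in an incomparability graph is intersection of curves and is not transitive, so none of these containments has an analogue you can simply ``read off the symmetric incomparability edge $uv$''; you would have to define $R_1$, $B$, $R$ from scratch and reprove Claims~\ref{claim:distinct}--\ref{claim:child}, and it is not clear any such definition exists. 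Second, and more fundamentally, the conformality step (the analogue of Lemma~\ref{lemma:St}) is exactly where you stop. In the comparability case, the uncovered set $X$ together with the blue minimal elements realizing $E_1,\dots,E_p$ hands you the two sides of a standard example because covering is by up-sets; in the incomparability case the hyperedges $N(y)\cap R$ are incomparability neighborhoods, and an $S_t$-free hypothesis gives no evident bound on their conformality --- recall that red-blue domination in co-bipartite graphs (incomparability graphs of height-$2$ posets) is already \textsc{Trans-Enum}-hard, so whatever saves you must come from a structural extraction you have not supplied. Until one of these gaps is closed, the statement remains a conjecture.
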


A directly easier conjecture is the following:

\begin{conjecture}\label{conj:2dpfreeincomps}
    For every $p$, there is an output-polynomial time algorithm for \textsc{Dom-Enum} in $2K_p$-free incomparability graphs.
\end{conjecture}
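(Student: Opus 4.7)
The plan is to extend the flashlight search of Section~\ref{sec:flashlight} to the $2K_p$-free setting by replacing the geometric border of size $3d$, derived from a bounded-dimension representation, by a combinatorial border of size bounded in terms of $p$ only. Given a $2K_p$-free incomparability graph $G$ of an underlying poset $P=(V,\leq)$, I would fix a total order $\prec$ on $V$ refining $\leq$; this $\prec$ plays the role of $\leq_1$ in Section~\ref{sec:flashlight} and defines a $\parent$ relation on irredundant sets of $G$ by removal of the $\prec$-maximum element. Flashlight search on the resulting tree then reduces, as in Section~\ref{sec:flashlight}, to the extension problem: given an irredundant set $I$, decide in polynomial time whether $I$ admits an upward extension into a minimal dominating set of $G$.

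The technical heart of the proof, and the main obstacle, is to establish a structural analogue of Theorem~\ref{thm:bt-extension}: in any irredundant set $I$ of $G$, there is a subset $\TT(I)\subseteq I$ of size at most $f(p)$, computable in polynomial time, such that every upward extension preserving a private neighbour of each element of $\TT(I)$ automatically preserves a private neighbour of each element of $I$. The argument would proceed by contradiction: if some $u\in I\setminus\TT(I)$ loses its private neighbour $v$ to a newly added vertex $x$ while every element of $\TT(I)$ retains a witness, then the tuple $(u,v,x)$ together with the preserved witnesses of $\TT(I)$ should, through iterated pigeonholing on comparability directions in $P$ and a Ramsey-type extraction on the resulting bipartite comparability pattern, contain two disjoint antichains of $P$ of size $p$ with all cross-pairs comparable in $P$, i.e., an induced $2K_p$ in $G$, contradicting the hypothesis.

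The principal difficulty is two-fold. First, identifying an explicit polynomial-time computable border $\TT(I)$ is non-trivial: the proof of Theorem~\ref{thm:bt-extension} crucially used the geometric positions of vertices on the $d$ lines to pin down the $3d$ extremal candidates, and replacing this geometry by a purely combinatorial selection rule is the core challenge. Second, the Ramsey-type extraction above must deliver \emph{cross-comparable antichains}, a structural conclusion strictly stronger than the $S_p$-like configurations (with cross-matching comparabilities) that private-neighbour arguments naturally produce; threading this distinction is precisely what separates $2K_p$-freeness from the weaker $S_p$-freeness of Conjecture~\ref{conj:Stfreeincomps}. If the program succeeds, one recovers polynomial delay and polynomial space as in Theorem~\ref{theorem:incomp}; if only the non-constructive existence of a bounded border can be shown, one still obtains output-polynomial time, which is what Conjecture~\ref{conj:2dpfreeincomps} asks for. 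As a fallback, one could attempt a flipping-based reduction analogous to Section~\ref{sec:flipping} transposed to the incomparability setting.
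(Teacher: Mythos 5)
You were asked to prove Conjecture~\ref{conj:2dpfreeincomps}, but the paper itself offers no proof: this is stated as an open problem, with only the cases $p=1$ (such graphs are cliques) and $p=2$ (via the $2K_2$-free algorithm of~\cite{bonamy2019kt}, which is not obtained by flashlight search) known to hold. Your proposal does not close the gap. All of the technical content is concentrated in the claim that every irredundant set $I$ of a $2K_p$-free incomparability graph admits a polynomial-time computable border $\TT(I)$ of size $f(p)$ whose private neighbours control those of all of $I$; for this you offer only the hope that a failure of the border property would, ``through iterated pigeonholing \dots{} and a Ramsey-type extraction'', produce an induced $2K_p$. No such argument is given, and you yourself label it the principal difficulty. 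A sketch that defers its key lemma to an unspecified extraction is a research programme, not a proof, and in this instance the missing lemma is essentially the whole conjecture.

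The gap is not cosmetic. In Theorem~\ref{thm:bt-extension} every step lives inside the $d$-line representation: the set $\UU(I)$ of admissible new vertices, the layers $\AA(I),\BB(I),\CC(I)$, and the chains of inequalities $x<_q v<_q b_\alpha<_q s_\beta$ and $s_\beta<_p a_t\leq_p\LL_p(I)<_p x$ are all statements about the $d$ linear orders, and the final contradiction is a single forbidden curve intersection, not an induced $2K_p$. Replacing the $d$ orders by one linear extension $\prec$ destroys both the notion of ``upwards'' that shields the private neighbours of non-border vertices and the pigeonhole over $\{1,\dots,d\}\setminus\{q\}$ that bounds the border by $3d$; you give no substitute. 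You would also need to show that the extension problem---deciding whether $I$ extends upwards into a minimal dominating set, which is NP-hard in general---becomes tractable under $2K_p$-freeness: Corollary~\ref{cor:bt-extension} derives tractability from the border theorem, so without a proved border you have no oracle to drive the flashlight search, and ``output-polynomial'' does not follow from a ``non-constructive existence of a bounded border'' either. Finally, the structures that private-neighbour arguments naturally produce (each witness adjacent to exactly one element of $I$) carry matching-type incomparabilities, i.e.\ $S_t$-like patterns as in Conjecture~\ref{conj:Stfreeincomps}; converting these into two disjoint antichains with \emph{all} cross-pairs comparable, which is what an induced $2K_p$ requires, is exactly the step you leave open.
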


Indeed, note that an $S_{2p}$ suborder in a poset yields a $2K_p$ in the corresponding incomparability graph, where $2K_p$ denotes two disjoint cliques on $p$ elements.
Symmetrically, Corollary \ref{corollary:comp} implies that for every $p$, there is an output-polynomial time algorithm for \textsc{Dom-Enum} in $K_{p,p}$-free comparability graphs. 
Conjecture~\ref{conj:2dpfreeincomps} holds trivially for $p=1$ (all such graphs are cliques). 
It is in fact also true for $p=2$, as $2K_2$-free graphs admit an output-polynomial time algorithm~\cite{bonamy2020kt}. 
The case $p=3$ is widely open without the incomparability assumption~\cite{bonamy2020kt}.

Questions abound around Conjecture~\ref{conj:2dpfreeincomps}, mainly by considering other simple cases of a co-bipartite $H$. 
The case $H=C_4$ is the smallest open case for general graphs, but $C_4$-free incomparability graphs are interval graphs~\cite{gilmore1964characterization}; there \textsc{Dom-Enum} admits a linear delay algorithm \cite{kante2013permutation}. A natural generalization would be the case of $H$ being two cliques with a matching between them, or $H$ being a clique minus a matching. Instead of debating which generalizations are most sensible, we state the following bold conjecture:

\begin{conjecture}\label{conj:cobipfreeincomps}
    For any co-bipartite $H$, there is an output-polynomial time algorithm for \textsc{Dom-Enum} in $H$-free incomparability graphs.
\end{conjecture}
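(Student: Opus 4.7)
The plan is to combine a Ramsey-theoretic structural reduction with a flashlight enumeration scheme modeled on Section~\ref{sec:flashlight}. Fix a co-bipartite graph $H$ with clique-partition $A\cup B$, where $|A|=a$ and $|B|=b$. If $G$ is the incomparability graph of a poset $P=(V,\leq)$ and $G$ is $H$-free, then $P$ avoids, as an induced suborder, every bipartite poset on two antichains of sizes $a$ and $b$ whose cross-comparability pattern matches the edges of the bipartite graph $\overline{H}$. Observe that the incomparability graph of $S_t$ is exactly two cliques of size $t$ joined by a perfect matching, so Conjecture~\ref{conj:Stfreeincomps} is the single instance of Conjecture~\ref{conj:cobipfreeincomps} where $H$ has this shape; the present conjecture demands handling \emph{every} bipartite forbidden pattern uniformly. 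Stage one is therefore to prove a Ramsey-type structural lemma generalizing Lemma~\ref{lemma:St}: there exist integers $t=t(H)$ and $k=k(H)$ such that in any $H$-free incomparability graph, every pair of large antichains in $P$ either exhibits a canonical $k$-bounded local witness or already produces an induced $S_t$ suborder.

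The second stage is a flashlight enumeration modeled on Section~\ref{sec:flashlight}. Fix an arbitrary order $v_1,\dots,v_n$ of $V$ and set $\parent(I)=I\setminus\{v_j\}$, where $v_j$ is the vertex of largest index in $I$. The central technical lemma to establish is the analogue of Theorem~\ref{thm:bt-extension} in this new setting: for every irredundant set $I$, there is a \emph{border} $\TT(I)\subseteq I$ of size bounded by a function of $H$ alone, such that $I$ extends to a minimal dominating set if and only if there is a tuple $(w_t)_{t\in\TT(I)}\in\prod_{t\in\TT(I)}\priv(I,t)$ for which the undominated vertices are covered by $V\setminus N[\{w_t:t\in\TT(I)\}]$. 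The Ramsey lemma of stage one is what forces the bounded border: any certificate of non-extensibility that touches many vertices of $I$ would realize a forbidden $H$ inside $G$. Plugging the resulting polynomial-time extension oracle into the flashlight DFS yields polynomial delay and polynomial space, and hence the desired output-polynomial time algorithm.

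An alternative route would be to prove first the symmetric assertion for $H$-free comparability graphs, i.e., strengthen Corollary~\ref{corollary:comp} from $S_t$-free posets to posets avoiding an arbitrary bipartite pattern prescribed by $\overline{H}$, and then transfer to incomparability graphs via the flipping framework (Theorem~\ref{theorem:flipping-to-redblue}) after a complementation argument. Both routes ultimately depend on the same Ramsey content of stage one.

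The decisive obstacle is precisely stage one. Even the height-two case---where $G$ is co-bipartite, $P$ is a bipartite order, and $\overline{G}$ is a general bipartite graph---reduces to \textsc{Dom-Enum} in bipartite graphs with a forbidden bipartite induced subgraph, which subsumes notoriously open instances such as $C_4$-free bipartite graphs (note that $C_4$ is itself co-bipartite). The Ramsey lemma must be strong enough to dispatch, uniformly over all co-bipartite $H$, every large $H$-free configuration to either a tractable $S_t$-free substructure or a bounded-size local witness. Achieving this will likely require new poset-theoretic Ramsey theorems in the style of~\cite{dushnik1941partially,Tro-book}, combined with the bounded-conformality dualization machinery of~\cite{khachiyan2007dualization,boros2004generating}, and it is precisely this combination that makes the statement a conjecture rather than a theorem.
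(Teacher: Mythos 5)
This statement is Conjecture~\ref{conj:cobipfreeincomps}; the paper explicitly presents it as a ``bold conjecture'' and offers no proof, so there is nothing to compare your attempt against. What you have written is a research plan, not a proof, and you concede as much in your final paragraph. The decisive gap is the one you name yourself: the ``Ramsey-type structural lemma'' of stage one is never stated precisely, let alone proved, and everything else hangs on it. Note also that your framing of $H$-freeness is slightly off: an induced copy of a co-bipartite $H$ in $G$ corresponds to an induced suborder of $P$ on two antichains whose \emph{incomparability} pattern across the parts realizes the non-edges of $H$ between $A$ and $B$, but the cross-comparabilities may be oriented arbitrarily, so the forbidden poset patterns form a family, not a single bipartite poset; any Ramsey statement must handle all orientations.

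Even granting stage one, stage two has a concrete hole. The claimed analogue of Theorem~\ref{thm:bt-extension} --- a border $\TT(I)\subseteq I$ of size bounded in terms of $H$ such that extendability is witnessed by a tuple of private neighbors of the border alone --- is asserted, not argued. The paper's proof of Theorem~\ref{thm:bt-extension} is entirely geometric: it uses the $d$-line curve representation, the operator $\UU(I)$ of vertices upwards from $I$, and the three layers $\AA(I),\BB(I),\CC(I)$ defined via the linear extensions, and the pigeonhole steps (``the values $t(i)$ must be all distinct'') live in $\{1,\dots,d\}$. None of this machinery exists for a general $H$-free incomparability graph, and your replacement of $\UU(I)$ by an arbitrary index order discards the very structure that makes the parent relation and the extension test interact correctly. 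Without such a lemma, deciding whether an irredundant set extends to a minimal dominating set is the general extension problem, which is not known to be (and is generally believed not to be) polynomial, so the flashlight DFS has no polynomial-delay guarantee. The alternative route via Theorem~\ref{theorem:flipping-to-redblue} fails for the same reason at a different point: that theorem applies to comparability graphs, and ``transfer by complementation'' to incomparability graphs is not something the flipping framework provides.
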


There is in fact no blatant reason why the incomparability condition would be necessary here. Other natural parameters for the classes considered in this paper concern the height and width of the poset.
The \emph{height} (resp.~\emph{width}) of a poset is the size of its maximum chain (resp.~antichain).
Since those are related to the largest size of a clique, the algorithm in~\cite{bonamy2020kt} covers the comparability graphs of posets of bounded height and the incomparability graphs of posets of bounded width.
We can show easily that \textsc{Dom-Enum} is tractable in the comparability graphs of posets of bounded width.

\begin{proposition}\label{prop:alpha-bound}
    Let $G$ be the comparability graph of a poset $P=(V,\leq)$ of width $\alpha$, and $D$ be a minimal dominating set of $G$.
    Then $|D|\leq 2\alpha$.
\end{proposition}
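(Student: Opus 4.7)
The plan is to show that every element of $D$ must be either minimal or maximal in $D$ with respect to the poset order restricted to $D$. Once this is established, the bound follows immediately: $\Min(D)$ and $\Max(D)$ are both antichains of $P$ (as noted in the preliminaries), so each has size at most $\alpha$, giving $|D| \leq |\Min(D)| + |\Max(D)| \leq 2\alpha$.

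The key claim to prove is therefore: if $u \in D$ is neither minimal nor maximal in $D$, then $u$ has no private neighbor with respect to $D$, contradicting minimality of $D$. So I would fix $u \in D$ and $u_1, u_2 \in D$ with $u_1 < u < u_2$ in $P$, and argue that $\priv(D,u) = \emptyset$. The analysis splits into cases based on a candidate private neighbor $v$. If $v = u$, then $u_1$ (being comparable to $u$) is adjacent to $u$ in $G$, so $u_1 \in N[u] \cap D$, showing $u$ is not self-private. If $v \neq u$, then since $v$ is adjacent to $u$ in $G$, we have $v \leq u$ or $u \leq v$ in $P$; by transitivity $v$ is then comparable to $u_2$ or to $u_1$ respectively, hence $N[v] \cap D$ contains a second element of $D$, and $v$ is not private to $u$.

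The argument above exhausts all candidates, so $u$ has no private neighbor, contradicting minimality of $D$. The only step requiring any care is the transitivity chase, which is routine. I do not expect any real obstacle; the structural observation that comparability graphs ``pinch'' private neighbors of interior elements of $D$ is the crux, and everything follows from the definitions together with the fact that antichains have size at most $\alpha$.
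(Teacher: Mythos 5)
Your proposal is correct and is essentially the paper's argument in contrapositive form: the paper finds a chain $x<y<z$ in any $D$ with $|D|>2\alpha$ and observes $N[y]\subseteq N[x]\cup N[z]$, which is exactly your claim that no element of $D$ can lie strictly between two others, after which both proofs count $D$ as the union of the two antichains $\Min(D)$ and $\Max(D)$. Your write-up just makes explicit the antichain decomposition that the paper leaves implicit in the step from $|D|>2\alpha$ to the existence of a $3$-chain.
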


\begin{proof}
    Consider for a contradiction a minimal dominating set $D$ of $G$ with $|D|> 2 \alpha$. Then $D$ contains three elements $x,y,z$ such that $x<y<z$. As a consequence, $N[y]\subseteq N[x]\cup N[z]$, which contradicts $\priv(D,y) \neq \emptyset$.
\end{proof}

Proposition~\ref{prop:alpha-bound} guarantees that a brute-force test of all small subsets yields a polynomial-time algorithm enumerating minimal dominating sets in the comparability graph of posets of bounded width.

As for incomparability graphs of posets of height $2$, we consider the incompatibilities of a bipartite order and observe that all co-bipartite graphs can be represented as incomparability graphs of a poset of height at most $2$. 
Therefore, bounded height is not a helpful parameter for incomparability graphs. A more interesting question is perhaps whether restricting posets to lattices yields efficient algorithms, both for comparability and incomparability graphs.

Our algorithm for incomparability graphs of bounded dimension relies heavily on their geometric representation, as in~\cite{golovach2018lmimwidth}. An open question is whether we can achieve polynomial delay without the representation.

\bibliographystyle{alpha}
\bibliography{main}

\end{document}